\documentclass[lettersize,journal]{IEEEtran}
\IEEEoverridecommandlockouts
\usepackage{cite}
\usepackage{amsmath,amssymb,amsfonts,amsthm}
\usepackage[lined, ruled, linesnumbered, vlined]{algorithm2e}
\usepackage{algpseudocode}
\algrenewcommand\algorithmicindent{-5em}
\usepackage{graphicx}
\usepackage{subfigure}
\usepackage{textcomp}
\usepackage{xcolor}
\usepackage{multirow}
\usepackage{multicol}
\usepackage[flushleft]{threeparttable}
\usepackage{tabularx}
\usepackage{array}
\usepackage[bookmarks=false]{hyperref}
\def\BibTeX{{\rm B\kern-.05em{\sc i\kern-.025em b}\kern-.08em
    T\kern-.1667em\lower.7ex\hbox{E}\kern-.125emX}}
\usepackage{makecell}


\newtheorem{definition}{Definition}	
\newtheorem{theorem}{\bf Theorem}		
\newtheorem{lemma}{\bf Lemma}		
\newcounter{appdx}
\setcounter{appdx}{1}
\newcommand{\kETAL}     {{\em et~al.}}		
\newcommand{\kEG} {{\em e.g.}}

\newcommand{\black}[1]{{\color{black} #1}}

\definecolor{myorange}{RGB}{255, 102, 0}

\newcommand{\revat}[1]{\black{#1}}
\newcommand{\revbt}[1]{\black{#1}}
\newcommand{\revct}[1]{\black{#1}}

\newcommand{\grey}[1]{{\color{black} #1}}
\definecolor{dgreen}{rgb}{0,0.655,0.149}

\newcommand{\red}[1]{{\color{black} #1}}
\newcommand{\icdcs}[1]{{\color{black} #1}}
\newcommand{\nosemic}{\renewcommand{\@endalgocfline}{\relax}}
\newcommand{\dosemic}{\renewcommand{\@endalgocfline}{\algocf@endline}}
\let\oldnl\nl
\newcommand{\nonl}{\renewcommand{\nl}{\let\nl\oldnl}}


\newcommand{\myendbox}{\hfill \qed}

\renewcommand{\S}{Section~}

\newcommand{\xjw}[1]{\black{#1}}


\setlength{\textfloatsep}{0.25\baselineskip plus 0.01\baselineskip minus 0.01\baselineskip}
\setlength{\abovedisplayskip}{1mm}
\setlength{\belowdisplayskip}{0mm}

\setlength{\floatsep}{0mm}
\setlength{\textfloatsep}{0mm}
\setlength{\dblfloatsep}{0.5mm}
\setlength{\dbltextfloatsep}{1mm}
\setlength{\intextsep}{1mm}
\setlength{\abovecaptionskip}{-2mm}
\setlength{\belowcaptionskip}{0mm}
\setlength{\parsep}{0mm}
\setlength{\topsep}{0.15em}   
\IEEEaftertitletext{\vspace{-0.35in}}
\IEEEelabelindent 0.0mm

\makeatletter
\renewenvironment{proof}[1][\proofname]{\par
  \pushQED{\qed}%
  \normalfont
  \topsep0pt \partopsep0pt 
  \trivlist
  \item[\hskip\labelsep
        \itshape
    #1\@addpunct{.}]\ignorespaces
}{%
  \popQED\endtrivlist\@endpefalse
  \addvspace{6pt plus 6pt} 
}
\makeatother

\begin{document}

\title{SA$^2$FE: A Secure, Anonymous, Auditable, and Fair Edge Computing Service Offloading Framework}

\author{Xiaojian~Wang,~\IEEEmembership{Graduate Student Member,~IEEE,}
        Huayue~Gu,~\IEEEmembership{Graduate Student Member,~IEEE,}
        Zhouyu~Li,~\IEEEmembership{Graduate Student Member,~IEEE,}
        Fangtong~Zhou,~\IEEEmembership{Graduate Student Member,~IEEE,}
        Ruozhou~Yu,~\IEEEmembership{Senior Member,~IEEE,}
        Dejun~Yang,~\IEEEmembership{Senior Member,~IEEE,}
        and~Guoliang~Xue,~\IEEEmembership{Fellow,~IEEE}

\thanks{%
Wang, Gu, Li, Zhou, Yu (\{xwang244, ryu5, hgu5, zli85, fzhou\}@ncsu.edu) are with North Carolina State University, Raleigh, NC 27606, USA.
Yang (\url{djyang@mines.edu}) is with Colorado School of Mines, Golden, CO 80401, USA.
Xue (xue@asu.edu) is with Arizona State University, Tempe, AZ 85287, USA.
The research of Wang, Yu, Gu, Li, Zhou was supported in part by NSF grants 2045539, 2414523 and 2433966.
The research of Xue was sponsored in part by the Army Research Laboratory and was accomplished under Cooperative Agreement Number W911NF-23-2-0225. The views and conclusions contained in this document are those of the authors and should not be interpreted as representing the official policies, either expressed or implied, of the Army Research Laboratory or the U.S. Government. The U.S. Government is authorized to reproduce and distribute reprints for Government purposes notwithstanding any copyright notation herein.

This work has been submitted to the IEEE for possible publication. Copyright may be transferred without notice, after which this version may no longer be accessible.
}
}

\maketitle


\begin{abstract}
The inclusion of pervasive computing devices in a democratized edge computing ecosystem can significantly expand the capability and coverage of near-end computing for large-scale applications.
However, offloading user tasks to heterogeneous and decentralized edge devices comes with the dual risk of both endangered user data security and privacy due to the curious base station or malicious edge servers, and unfair offloading and malicious attacks targeting edge servers from other edge servers and/or users.
\revat{Existing solutions to edge access control and offloading either rely on ``always-on'' cloud servers with reduced edge benefits or fail to protect sensitive user service information.}
To address these challenges, this paper presents SA$^2$FE, a novel framework for edge access control, offloading and accounting.
\revat{We design a rerandomizable puzzle primitive and a corresponding scheme to protect sensitive service information from eavesdroppers and ensure fair offloading decisions, while a blind token-based scheme safeguards user privacy, prevents double spending, and ensures usage accountability.}
The security of SA$^2$FE is proved under the Universal Composability framework, and its performance and scalability are demonstrated with implementation on commodity mobile devices and edge servers.

\end{abstract}


\begin{IEEEkeywords}
Edge computing, service offloading, security, anonymity, auditability, fairness
\end{IEEEkeywords}


\vspace{-4mm}
\section{Introduction}
\vspace{-1mm}

\noindent As real-time computation-intensive applications such as metaverse~\cite{duan2021metaverse}, cloud gaming~\cite{meng2023enabling}, and autonomous driving~\cite{bhardwaj2022ekya} continue to grow, service providers are increasingly deploying services closer to the users~\cite{edgeComputingMarket}.
Edge computing can greatly improve user experience and reduce the cost of service providers, by achieving low latency, high reliability and backhaul communication efficiency. 
An increasing number of companies are entering the arena~\cite{edgeComputingCompany}.

Meanwhile, the rise of the Pervasive Edge Computing (PEC) paradigm~\cite{ning2020distributed}, which utilizes the computing capabilities of varied and decentralized devices as edge servers, accelerates the expansion of the edge server provider landscape, by \emph{democratizing} the edge computing ecosystem and leveraging power of the crowd.
A PEC ecosystem may involve many large and small edge providers, including but not limited to telecom companies, road-side unit operators, private infrastructure owners, and even ad hoc providers such as individuals with spare computing devices~\cite{tourani2020democratizing}.
In most cases, a telecom company provides a connection access point for edge providers and users, and usually an accompanied edge service discovery procedure for users to access the available services~\cite{dougherty2021apecs}.

However, with the expansion of the edge computing ecosystem, and especially PEC with decentralized providers, 
both edge server owners and users encounter challenges in providing and utilizing trustworthy edge computing services.
To foster the sustainable development of the edge computing market, it is imperative to design and develop technical approaches that 
can safeguard user rights, protect stakeholder interests, and maintain healthy competition.

\xjw{
In the PEC environment, a very important and indispensable part is service discovery, which is used to find available services nearby.
\grey{In traditional service discovery within Named Data Networking or Information-Centric Networking, the requesting service identities are generally exposed to surrounding devices to efficiently allocate available services to the requesting users~\cite{kaiser2014efficient}.}
Service identities, such as names or types, if descriptive or inferable, could reveal information about the nature of the data or services, potentially exposing sensitive or proprietary information to anyone who can intercept this information~\cite{wu2016privacy}.
For instance, the type of service requested by a user (such as service name or identifier) could be misused in various ways, such as profiling and identifying a user~\cite{welke2016differentiating} or inferring sensitive user attributes (e.g., inferring that a user requesting video-based visual assistance has a visual impairment~\cite{weiss2018survey}). If intercepted or accessed by malicious entities, sensitive service names could be used to perform targeted attacks, including data breaches and service disruptions.
Protecting these service types/names from unauthorized disclosure is critical to maintaining the integrity and reliability of the network. 
\grey{Some privacy-preserving service discovery approaches have been proposed to protect service request privacy~\cite{zhu2004prudentexposure}, but only within traditional discovery environments. }
However, in a PEC environment, users also have this need but lack available solutions to protect service information. Additionally, users may not want to leak their identity, preferring to remain anonymous to protect their privacy.}

\xjw{In addition,}
given the diverse stakeholders involved in service offloading, fairness in the service offloading process is paramount. 
\revct{Fairness here refers to the equitable treatment of service requests among edge devices with comparable capabilities and minimal latency differences from the user's perspective, made possible by service provider profiling and testing to ensure that multiple edge servers with similar abilities compete for task assignments.}
The assurance of fairness should not be determined by any single entity, be it the end-user, the base station overseeing offloading, or specific edge servers.
Furthermore, 
ensuring financial accountability of the users, and appropriate compensation of the providers, is critical to ensuring longevity of the market.
Unfortunately, the above security and privacy guarantees are lacking in existing works~\cite{dougherty2021apecs, zhou2022aadec}, and offloading fairness and auditability have not been addressed in an ecosystem with untrusted parties.

\revat{In this paper, we design SA$^2$FE, an innovative framework for anonymous, auditable, and fair service offloading in a PEC environment, addressing key challenges in service offloading posed by the presence of multiple competing edge server infrastructure providers.
At the core of our approach is a rerandomizable puzzle primitive, which we define and design as the foundation of our framework to enable fairness in the offloading process. 
Building on this primitive, we propose a comprehensive framework that specifies detailed interaction protocols among all participating parties, ensuring offloading fairness while protecting service type privacy.
To safeguard user identity while maintaining authorized access and accountability, we propose an anonymous token-based service request scheme. 
We design a rerandomizable puzzle-based scheme that allows offloading a service request to an eligible edge server without revealing any service-specific details to the offloading broker, or edge server-specific details to the user.  
We rigorously prove SA$^2$FE's security and demonstrate its practicality on commodity devices.}

Our contributions are summarized as follows:
\begin{itemize}
    \item We propose SA$^2$FE, 
    a secure and efficient offloading framework that preserves the privacy of user identity and requested service type, ensures fairness in edge server selection, and incorporates auditing for accountability.
    
    \item We present a novel puzzle-based offloading protocol to protect service type confidentiality while ensuring fair and randomized edge server selection. Two implementations based on bilinear map and universal re-encryption respectively are proposed to realize the puzzle scheme.

    \item We propose a token-based service access scheme that maintains user and service type confidentiality while enabling accountable token verification and claiming.
    \item We formally prove the security of SA$^2$FE under the Universal Composability (UC) framework.
    \item We implement and evaluate a prototype of SA$^2$FE on commodity mobile and edge devices. 
    The experimental results show that SA$^2$FE has low computation and communication overhead and is efficient and scalable.
\end{itemize}

\noindent
\textbf{Organization.}
\revbt{
\revbt{\S\ref{sec:related-work} reviews related work.
{\S\ref{sec:system-model} introduces the system models.}
\S\ref{sec:overview} gives an overview of SA$^2$FE.
\S\ref{sec:solution} presents the detailed design of SA$^2$FE.
\S\ref{sec:security-analysis} presents security analysis of SA$^2$FE.
\S\ref{sec:eval} shows its performance.
\S\ref{sec:conclusions} concludes this paper. 
}
}



\vspace{-2mm}
\section{Related Work}
\label{sec:related-work}

\noindent 
One related aspect of cloud and edge computing security is the access control problem.
\grey{In cloud computing, access control mainly focuses on protecting security and confidentiality of user data hosted on third-party cloud storage~\cite{xue2018combining}}
Some have studied secure and privacy-preserving data sharing through a centralized cloud~\cite{zheng2022towards,hu2020ghostor}.
The cloud provider plays a central role in facilitating access control as the single party involved. 
APECS~\cite{dougherty2021apecs} is the first distributed, multi-authority access control scheme in a dynamic pervasive edge computing ecosystem.
However, APECS mainly focuses on user data access control, neglecting anonymity and privacy preservation during service offloading, and fairness considerations. 
\grey{AADEC~\cite{zhou2022aadec} focuses on access control, prioritizing data exchange over offloading at the base station, with auditing confined to user data rather than service offloading.}

User data privacy has been studied in either cloud or edge offloading.
\grey{Li \kETAL~\cite{li2022dsos} proposed a system for solving overdetermined linear equations, ensuring privacy via permutation and validity through a detection algorithm.}
Mao \kETAL~\cite{mao2020privacy} proposed 
combining differential privacy and secure model weight aggregation to ensure privacy-preserving offloading of DNN training tasks.
Chen \kETAL~\cite{chen2013new} proposed a secure outsourcing algorithm for modular exponentiations in the one-malicious version of the two untrusted program model.

\grey{Many have studied edge offloading focusing on improving offloading performance subject to limited resources, such as resource provisioning~\cite{chen2019efficient}, task partitioning~\cite{gao2021task}, task selection~\cite{gao2019dynamic}, load balancing~\cite{park2018cooperative}, etc.}
\revat{Some recent works focus on task offloading in various edge computing scenarios using different methods, such as at intersections with game theory~\cite{zhang2025optimizing}, in satellite networks using queuing theory~\cite{jia2024deep} or game theory~\cite{chen2024multi}, in online offloading scenarios employing Deep Reinforcement Learning~\cite{wang2024ddqn}, in Aerial Mobile Edge Computing Networks through joint optimization~\cite{sun2024joint}, and using pairing theory to match services~\cite{li2025edge}.
These methods assume trust among all parties and overlook privacy concerns.}

To summarize, while existing work has addressed certain individual security concerns such as authentication, access control, user data privacy and location privacy, there lacks a comprehensive
framework for service offloading in a democratized edge computing ecosystem
that ensures offloading security, user identity and service anonymity, token accountability, and offloading fairness all at once.
Our proposed framework SA$^2$FE not only fills this gap and ensures secure offloading, but is also highly efficient, scalable, and compatible with  commodity mobile devices and edge servers.



\vspace{-3mm}
\section{Models and Problem Statement}
\label{sec:system-model}

\vspace{-1mm}
\subsection{System Model}

\noindent 
Fig.~\ref{model} shows the involved parties and their interactions.
\noindent
SA$^2$FE involves five parties: financial authority (FA), service provider (SP), base station (BS), edge server (ES) and user:
\begin{enumerate}
    \item \textbf{FA}: The FA receives payment from users, and distributes tokens for service access. It also handles reward claims from BS and ES with valid tokens as proof of service.

    \item \textbf{SP}: 
    An SP owns a service and delegates it to ESs from various registered edge server infrastructure providers, delivering edge-based services to authorized users.

    \item \textbf{BS}: 
    The base station is the broker between users and ESs. 
    It assists users within its range by discovering available ESs and offloading tasks of supported services.
    All communication between users and ESs will go through the BS.

    \item \textbf{ES}: 
    An ES provides services to users on behalf of SPs and
    may offer multiple services owned by different SPs.

    \item \textbf{User}:
    A user requests task offloading for a service that she is subscribed to, and needs to provide payment (or proof of it) to utilize an edge-offloaded service.

\end{enumerate}

\begin{figure}[tt]
    \centerline{\includegraphics[width=0.38\textwidth]{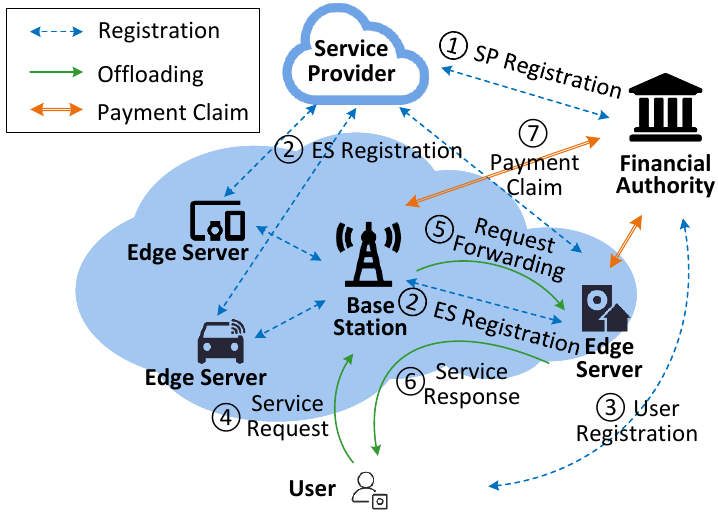}}
    \caption{
    \revat{
    SA$^2$FE workflow. 
    (1) SP registers to FA;
    (2) ES registers to SP and BS;
    (3) User gets tokens from FA;
    (4) User starts service request;
    (5) Request is forwarded to an ES; 
    (6) User gets response;
    (7) BS and ES claim tokens.
    The workflow consists of three main phases: registration (steps (1)$-$(3)), offloading (steps (4)$-$(6)), and payment claim (step (7)).
    }
    }

    \label{model}
\end{figure}

    \grey{We focus on a single BS but can be extended to multiple BSs managing different regions with shared information.}

\noindent \textbf{Interactions.}
Fig.~\ref{model} shows the interactions among these parties during an offloading process, with numbered  steps as follows:
\begin{enumerate}
\item An SP registers service-related information with the FA.
\item An ES registers with an SP to obtain the service program (\kEG~a virtual machine, container image or microservice), and service credentials to authenticate itself to users.

\item 
A user registers with the FA to deposit service pre-payments and obtain service access credentials (tokens).
\item 
When the user is within a BS which has connected ESs, the user requests offloading of her tasks from the BS.
\item 
A connected ES which is eligible to serve the request will be selected, and the BS forwards the service request to it.

\item The ES responds to the request, and starts the actual offloading process with the user.
\item The BS and ES claim pre-negotiated rewards from the FA with valid proof of service. 
\end{enumerate}

\grey{Existing work typically assumes full trust among the user, BS, and all ESs.}
\grey{For instance, the BS would neither intercept nor infer the user's service data or type and would ensure fair offloading to all ESs.}
\grey{ESs would provide services without intercepting user identity or unregistered service data.}
Users would not interfere the offloading process 
or engage in double spending for services.
\grey{In practice, these assumptions would not always hold, especially in a democratized ecosystem where neither the user nor surrounding parties can be fully trusted.}

\subsection{Threat Model}\label{threat_model}
\noindent 
We assume global parties (FA and SPs) will diligently adhere to the offloading protocol.
This is because each global party may serve many users and stakeholders, and is commonly bound by reputation to perform honestly.
In the mean time, 
local parties may deviate from the designated protocols to launch active attacks, such as a BS of a small regional Internet Service Provider (ISP), or an ES from a local provider. Similarly, a user is not  trusted to execute the protocol diligently.

A user may exhibit malicious behavior, such as expressing a preference for specific ESs, to disturb the fairness of service offloading.
Furthermore,
a user may target a specific ES to either perform reconnaissance attack in order to identify potential vulnerabilities of the ES and launch further attacks, or conduct targeted denial-of-service attacks to overwhelm target ES's resources.
A user may also try to deceive both the BS and ES by engaging in double spending, utilizing the same payment to acquire multiple offloading services, possibly from different ESs. She may also use fraudulent authorization to acquire services without making valid payments.

Meanwhile, for a user,
\grey{all other parties may possess a curiosity regarding the user's real identity and data for purposes such as data mining, targeted advertising, extracting personal information, and user tracking. }
\grey{Additionally, the user may want to hide her requested type of service from parties other than those required in purchasing and fulfilling the service, such as the BS and any ES that is not eligible to provide the service.}
Leaking the service type to untrusted parties compromises user privacy and poses security risks. 
For example, a medical offloading task exposes sensitive health information, 
while disclosure of the service type in financial transactions, location-based services, and personal preferences also poses privacy risks.
\grey{The BS and ES may also exaggerate rewards, compromising FA integrity and user payments.}

To summarize, 
We consider the following attack scenarios:
(a) A malicious user may attempt to acquire services from ESs by double spending or forging payment proofs.
(b) A malicious user may try to identify and request service from a specific ES, for instance, to disturb offloading fairness, perform reconnaissance of the ES's system, or launch denial-of-service attacks against the ES.
(c) The BS and non-eligible ESs may be curious about users' real identity, data and the service type of the request.
(d) The BS and ESs may be curious about a users' real identity, and the FA and SPs may want to link a user's identity with the time and location that she accesses a paid service.
(e) The BS/ES may exaggerate the service it provided to get extra rewards from FA.

\grey{We assume a requested service is non-identifiable except by its service type, as many services, like video analytics, share similar traffic patterns despite differing tasks.}
\grey{There are also some studies on hiding traffic patterns from eavesdroppers, such as task partitioning or traffic padding~\cite{aloufi2021edgy}.}

\vspace{-4mm}
\subsection{Problem Statement}
\noindent
Let $\mathbb{U}$, $\mathbb{S}$, $\mathbb{E}$, $\sf{B}$ be the set of users, set of services, set of ESs and the BS respectively.
We consider an offloading scenario where a user $u \in \mathbb{U}$ offloads a task of service $s \in \mathbb{S}$ to an ES $e \in \mathbb{E}$ through the BS $\sf{B}$. 
The user tries to conceal her identity from all other parties,
and keep the service type hidden from the BS and non-eligible ESs.
We require that neither the user nor the BS can ``assign'' an ES to serve a specific request; instead an eligible ES must be \emph{randomly} selected for a specific request. This both deters user reconnaissance and other malicious behaviors against a specific ES, and ensures fairness in offloading to promote community-wide sustainability and equal opportunities for all eligible ESs.
The offloading service should only be provided when the user shows proof of payment, and the reward can only be claimed when the BS/ES shows proof of service, without double spending or exaggerated claiming.

\noindent \textbf{Security goals.}
\noindent 
Our main security goals are as follows:
\begin{enumerate}
    \item 
    \textbf{Authenticated and authorized access}:
    Access to an ES-provided service is only limited to paid users of the service. 

    \item 
    \textbf{Identity privacy}:
    A user's identity is kept confidential from other parties during and after the offloading process.

    \item 
    \textbf{Service data confidentiality}:
    The service data of a user is only accessible to an eligible ES providing the service.
    
    \item 
    \textbf{Service type confidentiality}:
    The BS and non-eligible ESs have no knowledge about the requested service type.

    \item 
    \textbf{Financial accountability}:
    A user cannot get more than the paid service using invalid or double-spent payment proof.
    BS or an ES cannot claim reward without fulfilling a request, or claim multiple rewards for one service fulfillment.

     \item 
    \revct{\textbf{Offloading fairness}: ESs eligible for selection by the user are assigned an equal probability of serving user requests, ensuring fairness in service allocation.}
\end{enumerate}

\revct{We focus on scenarios where the SP has performed profiling or testing of ESs to ensure that the capabilities and latency of ESs within the pool available to the user are similar. From the user's perspective, the quality of service or quality of experience provided by these ESs in the pool is equivalent or nearly identical. An out-of-band profiling phase enables the SP to assess the servicing capabilities of ESs~\cite{do2011profiling}. 
This ensures fairness by simplifying the selection process while reducing user burden, avoiding biases caused by performance differences, and protecting ESs from being targeted by malicious users. 
Even in such scenarios, achieving fairness under our security goals is challenging, as it requires preserving service type confidentiality, ensuring efficient task allocation, and avoiding security risks or excessive overhead.
}



\section{SA$^2$FE Overview}
\label{sec:overview}

\begin{figure}[tt]
    \centerline{\includegraphics[width=0.4\textwidth]{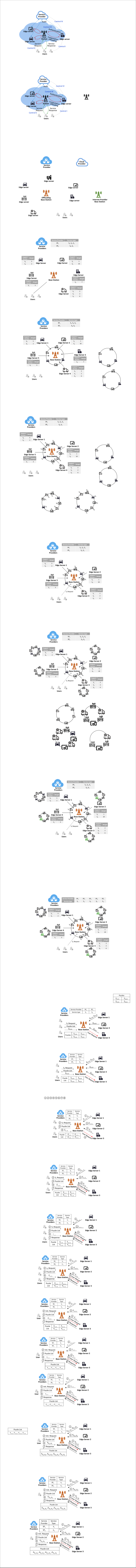}}
    \caption{
    Rerandomizable puzzle-based offloading example. 
    Suppose there are two SPs offering two types of services, $s_1$ and $s_2$, and three ESs $e_1$, $e_2$ and $e_3$ attempting to assist the SPs in delivering services. 
    $e_1$ provides both $s_1$ and $s_2$, $e_2$ provides $s_2$, and $e_3$ provides $s_1$.
    Denote the puzzle of service $s_j$ from edge server $e_i$ as $z_{e_i,s_j}$.
    Suppose a user intends to use service $s_1$.
    (1) ESs register service-related puzzles with the BS.
    (2) User initiates a request for an (unspecified) offloading service.
    (3) BS responds to user with a puzzle list.
    (4) User selects a puzzle $z_{e_3,s_1}$ and returns it to the BS.
    (5) BS forwards the user's request to selected ES $e_3$.
    (6) $e_3$ returns service response to the BS.
    (7) BS forwards the response to the user.
    }
    \label{overview}
    \vspace{0.5em}
\end{figure}

\noindent

\noindent
SA$^2$FE operates in four phases: system initialization, registration, offloading, and payment claim.
In \textbf{system initialization}, system parameters are independently initialized by all parties.
In \textbf{registration}, an ES obtains permission to host a service from an SP, and then registers service information, encrypted as \emph{cryptographic puzzles}, with a BS to serve offloading requests from local users.
A user also registers with an SP and makes payment through an FA to obtain \emph{blind tokens} for requesting the service.
In \textbf{offloading}, a user requests an offloading service through the BS, solves the service-specific puzzles and randomly picks an ES capable of providing the service.
The BS then forwards the (encrypted) request to the user-selected ES without knowing the service type.
The request contains tokens specific to the requested service, such that the BS can check for any potential double spending (again without knowing the service type), and both BS and ES can later claim service payments from the FA with the tokens.
In \textbf{payment claim}, the FA verifies the tokens submitted by the BS or ESs, and makes payment accordingly.

In SA$^2$FE, the two key building blocks are: blind tokens for service access, and cryptographic puzzles for offloading.

\noindent \textbf{Token-based service access.}
We develop a token-based service access scheme that ensures user anonymity. 
Our scheme maintains secrecy of the service type from the BS while allowing service type verification by the ESs. 
\revct{The token, specifically designed for authenticated and anonymous access to edge services, is blindly signed during user registration based on the blind signature scheme to preserve user privacy.}
A token contains a service-agnostic part for the BS, and a service-specific part for the ES signed by a service-related key.
This allows the BS to check the token for potential double spending without knowing the service type, and the ES and FA to check the service type for potential token misuse during offloading and reward claiming respectively.

\noindent \textbf{Puzzle-based offloading.}
To prevent the BS from learning a user's requested service type while still allowing forwarding the request to an eligible ES, we design a puzzle-based offloading process as shown in Fig.~\ref{overview}.
\revct{The puzzle mechanism is essential for secure and efficient offloading in PEC environments. By preventing the BS from learning service type information and ensuring that users cannot target specific ESs, puzzles  address threats like service type inference, malicious targeting, and fairness violations. Additionally, they enable a practical offloading process by eliminating the need for inefficient methods, such as random assignments or broadcasts by BS, which would otherwise increase complexity and resource consumption.}
Instead of registering plaintext service information at the BS, each ES will generate service-specific random puzzles that are indistinguishable from puzzles of other services.
The puzzles can only be solved by users with each specific service's key obtained during user registration and payment.
To ensure fair offloading, the BS sends all puzzles to the requesting user, who then solves the puzzles of its requested service, and then randomly picks one puzzle representing a random ES who can serve the service.
The puzzle contains no identifiable information about the ES, ensuring that the user cannot identify or target a specific ES during the selection (thus ensuring both fairness and protection of the ES).
Further, after every service request, the BS re-randomizes all puzzles and permutes the puzzle list to ensure that puzzles from two requests are unlinkable.

\subsection{Preliminary: Blind Signature}
\noindent SA$^2$FE makes use of a blind signature scheme as a building block, which we shall describe here for completeness.

\revct{
Blind signature~\cite{fuchsbauer2020blind} is an unlinkable digital signature scheme that allows a signer to sign a message without knowing the message content. 
The algorithms are specified as follows:

\begin{enumerate}
    \item $\text{BlindSetup}(1^\lambda) \rightarrow (PK,SK)$: Given security parameter $\lambda$, it outputs the public key $PK$ and secret key $SK$.
    
    \item $\text{BlindMsg}(PK, m, r) \rightarrow m'$: Takes $PK$, message $m$, and random number $r$ as input, outputs blinded message $m'$.
    
    \item $\text{BlindSign}(PK, SK, m') \rightarrow s'$: Takes $PK$, $SK$, and blinded message $m'$ as input, outputs signature $s'$.
    
    \item $\text{UnblindSign}(PK, s', r) \rightarrow s$: Takes $PK$,  $s'$, and random number $r$ as input, outputs signature $s$ for message $m$.
    
    \item $\text{BlindVerify}(PK, m, s) \rightarrow \{0,1\}$: Takes $PK$, $m$, and $s$ as input, outputs 1 if $s$ is valid for $m$, otherwise 0.
\end{enumerate}

A secure blind signature scheme realizes two security properties: unforgeability and blindness~\cite{turan2020tmps}.
Unforgeability ensures that only the signer can generate valid blind signatures.
Blindness ensures that the signer cannot know the message content corresponding to the blind signature she has signed.

}



\vspace{-2mm}
\section{SA$^2$FE Design}
\label{sec:solution}
\noindent
In this section, we first design the puzzle primitive and present its two implementations. 
\revat{Table~\ref{tab1:notation} lists SA$^2$FE notations.}

\vspace{-5mm}
\subsection{Puzzle Design}
\noindent 
A \textbf{rerandomizable puzzle} is constructed for a specific solution.
Anyone with the puzzle and the solution can verify that the solution is correct.
It allows anyone with neither the solution nor any secret used when constructing the puzzle to rerandomize the puzzle without changing the solution.
\revct{
\begin{definition}
A \textbf{rerandomizable puzzle} scheme consists of the following four algorithms:
\begin{enumerate}
    \normalfont{
    \item $\text{PuzzleSetup}(1^\lambda) \!\rightarrow\! params$: 
    Initialize puzzle parameters. 
    \item $\text{PuzzleGen}(m) \!\rightarrow\! puzzle$: Generate a $puzzle$ given a solution message $m$.
    \item $\text{PuzzleMatch}(m,puzzle) \!\rightarrow\! \{0,1\}$: 
    Check if $m$ is the solution to $puzzle$.
    \item $\text{PuzzleRerandomize}(puzzle) \!\rightarrow\! new\_puzzle$:
    Rerandomize $puzzle$ without changing the solution $m$, such that $new\_puzzle$ is unlinkable to $puzzle$.\myendbox

    }
\end{enumerate}
\end{definition}
}

\revct{
The following properties must be fulfilled:
(a) Correctness: $\text{PuzzleMatch}(m, \text{PuzzleGen}(m) \!=\! 1$ for any $m$; 
(b) Soundness: $\Pr[\text{PuzzleMatch}(\hat{m}, \text{PuzzleGen}(m)) \!=\! 1] \approx 0$ for $\hat{m} \!\neq\! m$; 
(c) Indistinguishability: it is computationally hard to distinguish $\text{PuzzleGen}(m)$ from $\text{PuzzleGen}(\hat{m})$ for any $m \!\neq\! \hat{m}$; 
(d) Unlinkability: given a ${puzzle}$, a  ${new\_puzzle}$ can be generated such that $\text{PuzzleMatch}(m, {new\_puzzle}) \!=\! 1$ and ${new\_puzzle}$ is unlinkable to ${puzzle}$ for any $m$.
}

In the following, we propose two puzzle implementations based on bilinear map and universal re-encryption respectively to realize the rerandomizable puzzle primitive.

\noindent \textbf{Puzzle based on bilinear map.}
Let $\mathbb{G}_1$, $\mathbb{G}_2$ and $\mathbb{G}_T$ denote three 
cyclic groups of a prime order $p$, and $g_1$ and $g_2$ be the generators of group $\mathbb{G}_1$ and $\mathbb{G}_2$ respectively.
\grey{A bilinear map $e:\mathbb{G}_1 \!\times\! \mathbb{G}_2 \!\rightarrow\! \mathbb{G}_T$ satisfying bilinearity, computability, and non-degeneracy can be used to construct the bilinear-based rerandomizable puzzle as follows.}

\begin{enumerate}
    \item $\text{PuzzleSetup}(1^\lambda) \!\rightarrow\! params$: 
    Let $parmas = (g_1, g_2)$.
    \item $\text{PuzzleGen}(m)\!\rightarrow\! puzzle$:
    Generate random factor $r \in \mathbb{Z}_p^*$ where $r \mod m = 0$. Then output $puzzle = (z_{[1]},z_{[2]})$, where $z_{[1]} = g_2^{r/m}$ and $z_{[2]} = g_2^{r}$.
    \item $\text{PuzzleMatch}(m,puzzle) \!\rightarrow\! \{0,1\}$:
    Check if $e(g_1^m,z_{[1]}) \!=\! e(g_1,z_{[2]})$.
    \item $\text{PuzzleRerandomize}(puzzle) \!\rightarrow\! new\_puzzle$: Output $new\_puzzle \!=\! ((z_{[1]})^{r'},(z_{[2]})^{r'})$ with random $r' \!\in\! \mathbb{Z}_p^*$. 
\end{enumerate}

\noindent \textbf{Puzzle based on universal re-encryption.}
ElGamal encryption~\cite{elgamal1985public} is an asymmetric key encryption scheme for public-key cryptography.
\grey{With the public key, any ciphertext can be re-encrypted into an unrelated ciphertext.}
\grey{The universal re-encryption scheme~\cite{golle2004universal} hides the public key by appending a second ElGamal ciphertext encrypting the integer $1$.}
\grey{Leveraging ElGamal's algebraic homomorphism, the second ciphertext can re-encrypt the first without exposing the public key.}
\grey{The universal re-encryption-based puzzle is constructed as follows.}
\begin{enumerate}
    \icdcs{\item $\text{PuzzleSetup}(1^\lambda) \!\rightarrow\! (x,y)$: Output $(x,y=g^x)$, where $x \in \mathbb{Z}_q$,
    $g$ is a generator for a group $\mathbb{G}$ with order $q$.}
    \item $\text{PuzzleGen}(m) \!\rightarrow\! puzzle$: 
    Generate random factor $r \!=\! (r_0,r_1) \!\in\! \mathbb{Z}_q^2$.
    Then output $puzzle = [(\alpha_0,\beta_0);(\alpha_1,\beta_1)] = [(my^{r_0},g^{r_0});(y^{r_1},g^{r_1})]$.
    
    \item $\text{PuzzleMatch}(m, puzzle) \!\rightarrow\! \{0,1\}$:
    Verify $\alpha_0, \beta_0, \alpha_1, \beta_1 \in \mathbb{G}$, return $0$ if invalid.
    Compute $m_0 \!=\! \alpha_0/\beta_0^x$ and $m_1 \!=\! \alpha_1/\beta_1^x$. 
    If $m_1 \!=\!1$, return $1$ if $m_0=m$.

\icdcs{
    \item $\text{PuzzleRerandomize}(puzzle) \!\rightarrow\! new\_puzzle$: 
    Output $[(\alpha'_0,\beta'_0);(\alpha'_1,\beta'_1)] \!=\! [(\alpha_0 \alpha_1^{r'_0}, \beta_0 \beta_1^{r'_0});(\alpha_1^{r'_1}, \beta_1^{r'_1})]$ with random factor $r' = (r'_0,r'_1) \in \mathbb{Z}_q^2$.
}

\end{enumerate}

\revat{These two designs are based on different assumptions and have different overheads.}
The bilinear map puzzle and the universal re-encryption puzzle are based on Decisional Bilinear Diffie-Hellman assumption (DBDH)~\cite{green2007identity} and Decisional Diffie-Hellman assumption (DDH)~\cite{tsiounis1998security}, respectively. 
\revat{We evaluate the overhead of these two designs in the \S\ref{sec:eval}.}

\begin{table}[tt]
    \centering
    \caption{\revat{Notation table}}
    \vspace{4pt}
    \begin{tabular}{|c|p{27mm}|c|p{20mm}|}
    \hline
    \textbf{Symbol} & \textbf{Definition} & \textbf{Symbol} & \textbf{Definition} \\
    \hline
    
    $\lambda$ & Security parameter &$k_s$ & Service key \\
    $pk_p, sk_p$ & FA public \& secret key & $s\_type$ & Service type \\
    $pk_s, sk_s$ & SP public \& secret key & $s\_alg$ & Service program \\
    $m_1, m_2$ & Random messages & $Z_{{\sf ID}_u}$ & User puzzle list\\
    $p\_map$ & Puzzle mapping table &  &  \\

    \hline
    \end{tabular}
    \label{tab1:notation}
\end{table}

\revct{
\begin{theorem}
    The bilinear map-based puzzle and the universal re-encryption-based puzzle satisfy the properties of correctness, soundness, indistinguishability, and unlinkability as defined for a randomizable puzzle scheme.
    \end{theorem}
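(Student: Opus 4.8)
The plan is to check the four required properties---correctness, soundness, indistinguishability, and unlinkability---one at a time for each of the two constructions. Correctness and soundness will be immediate algebraic facts; indistinguishability will reduce to the stated hardness assumptions (DBDH for the bilinear-map puzzle, DDH for the universal re-encryption puzzle); and unlinkability will follow from a distributional argument showing that a rerandomized puzzle is distributed essentially as a freshly generated puzzle with the same solution. I expect the indistinguishability step to be the only one that needs real care.

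For the bilinear-map puzzle, correctness is just bilinearity: $e(g_1^m, z_{[1]}) = e(g_1^m, g_2^{r/m}) = e(g_1,g_2)^r = e(g_1, g_2^r) = e(g_1, z_{[2]})$, and raising both components to a common $r' \in \mathbb{Z}_p^*$ leaves this identity intact, so $\text{PuzzleRerandomize}$ preserves the solution. Soundness holds because $\text{PuzzleMatch}(\hat m, \text{PuzzleGen}(m)) = 1$ forces $\hat m \equiv m \pmod p$, which never occurs for distinct solution messages in $\mathbb{Z}_p^*$, so the probability is exactly $0$. For the universal re-encryption puzzle, correctness and soundness reduce to correctness of ElGamal decryption: on an honestly generated puzzle (and, by a one-line check, on a rerandomized one) we have $m_0 = \alpha_0/\beta_0^x = m$ and $m_1 = \alpha_1/\beta_1^x = 1$, so $\text{PuzzleMatch}$ returns $1$ precisely when the tested message equals $m$.

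For indistinguishability of the universal re-encryption puzzle I would argue that the pair $(y^{r_1}, g^{r_1})$ is an ElGamal encryption of the constant $1$ and carries no information about the solution, while $(m y^{r_0}, g^{r_0})$ is an ElGamal encryption of $m$ under $y$; hence a distinguisher between $\text{PuzzleGen}(m)$ and $\text{PuzzleGen}(\hat m)$ immediately breaks semantic security of ElGamal, and thus DDH. For the bilinear-map puzzle, a puzzle for $m$ is a pair $(g_2^a, g_2^{am})$ with $a$ uniform, and I would reduce distinguishing it from $(g_2^b, g_2^{b\hat m})$ to DBDH by embedding a challenge into the puzzle components. This is the step I expect to be the main obstacle: the claim is sensitive to exactly what public data (for instance service-specific tags such as $g_1^m$) the adversary is handed, and the reduction must be arranged so that a successful puzzle-distinguisher yields a genuine DBDH solver; getting the game model right is as much of the work here as the algebra.

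Finally, unlinkability. For the bilinear-map puzzle, $\text{PuzzleRerandomize}$ sends $(g_2^a, g_2^{am})$ to $(g_2^{ar'}, g_2^{ar'm})$ with $r' \in \mathbb{Z}_p^*$ uniform; since $a \neq 0$, $ar'$ is uniform over $\mathbb{Z}_p^*$ and its conditional distribution given the original randomness is still uniform, so the new puzzle is distributed exactly as an independent fresh puzzle for the same solution, giving perfect unlinkability. For the universal re-encryption puzzle, the rerandomized puzzle has randomness $(r_0 + r_1 r'_0,\; r_1 r'_1)$; conditioned on the original randomness with $r_1 \neq 0$ (which fails with probability only $1/q$), both coordinates are uniform and mutually independent, so the result is statistically close to a fresh puzzle for the same solution, matching the re-randomization property established for universal re-encryption by Golle \kETAL~\cite{golle2004universal}. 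Assembling the four items for both constructions then completes the proof.
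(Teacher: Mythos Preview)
Your proposal is correct and, for correctness, soundness, and indistinguishability, follows essentially the same line as the paper: direct algebraic verification for the first two, and a reduction to DBDH (bilinear) or DDH/ElGamal semantic security (universal re-encryption) for the third. Your caveat that the bilinear-map indistinguishability reduction is sensitive to exactly which public values the distinguisher sees is well taken; the paper's proof simply asserts the reduction without spelling this out.

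The genuine difference is in unlinkability. The paper argues \emph{computational} unlinkability by once more invoking DBDH and DDH to say that the original and rerandomized puzzles are indistinguishable. You instead argue \emph{perfect} (bilinear) or \emph{statistical} (universal re-encryption) unlinkability via a distributional claim: since $r'\in\mathbb{Z}_p^*$ is uniform and independent of the original randomness, $(g_2^{ar'},g_2^{ar'm})$ is distributed exactly as a fresh puzzle for $m$ even conditioned on the original puzzle; and similarly the rerandomized ElGamal randomness $(r_0+r_1r_0',\,r_1r_1')$ is uniform and independent of $(r_0,r_1)$ whenever $r_1\neq 0$. Your route is tighter---it yields information-theoretic unlinkability without any hardness assumption---and is the standard way rerandomization is analyzed in the universal re-encryption literature you cite. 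The paper's route has the minor advantage of uniformity (one assumption covers both indistinguishability and unlinkability), but at the cost of a weaker conclusion than is actually available.
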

    
    \begin{proof}
    We provide proofs for each property for both the bilinear map-based and universal re-encryption-based puzzles.

    Correctness: The correctness of the bilinear map-based puzzle follows from $e(g_1^m, g_2^{r/m}) = e(g_1, g_2^{r})$. For the universal re-encryption-based puzzle, correctness follows from $\alpha_1/\beta_1^x = y^{r_1}/g^{r_1x}=1$ and $\alpha_0/\beta_0^x = m y^{r_0}/g^{r_0x}=m$.

    Soundness: The soundness of the bilinear map-based puzzle holds because, for any $\hat{m} \!\neq\! m$, $e(g_1^{\hat{m}}, g_2^{r/m}) \!\neq\! e(g_1, g_2^{r})$. 
    For the universal re-encryption-based puzzle, soundness is ensured since $\hat{m} y^{r_0}/g^{r_0x} \!\neq\! m$ for any $\hat{m} \!\neq\! m$.

    Indistinguishability: The bilinear map-based puzzle's indistinguishability relies on the DBDH assumption, ensuring $(g_2^{r/m},g_2^{r})$ and $(g_2^{\hat{r}/\hat{m}},g_2^{\hat{r}})$ with $e(g_1^{m},g_2^{r/m})\!\!=\!\!e(g_1,g_2^{r})$ and $e(g_1^{\hat{m}},g_2^{\hat{r}/\hat{m}})\!\!=\!\!e(g_1,g_2^{\hat{r}})$ are indistinguishable.
    For the universal re-encryption-based puzzle, it derives from the DDH assumption, ensuring $[(m(g^x)^{r_0},g^{r_0});((g^x)^{r_1},g^{r_1})]$ and $[(\hat{m}(g^x)^{\hat{r_0}},g^{\hat{r_0}});((g^x)^{\hat{r_1}},g^{\hat{r_1}})]$ are indistinguishable.

    Unlinkability: The bilinear map-based puzzle's unlinkability also relies on the DBDH assumption, ensuring that $(g_2^{r/m},g_2^{r})$ and $(g_2^{rr'/m},g_2^{rr'})$ with $e(g_1^{m},g_2^{r/m})\!=\!e(g_1,g_2^{{r}})$ and $e(g_1^{m},g_2^{rr'/m})\!=\!e(g_1,g_2^{{r}r'})$ remain indistinguishable, ensuring unlinkability.
    For the universal re-encryption-based puzzle, unlinkability similarly derives from the DDH assumption, ensuring that $[(m(g^x)^{r_0},g^{r_0});((g^x)^{r_1},g^{r_1})]$ and $[(m(g^x)^{r_0}(g^x)^{r_1r'_0},g^{r_0}g^{r_1r'_0});((g^x)^{r_1r'_1},g^{r_1r'_1})]$ are indistinguishable, ensuring unlinkability between the original puzzle and the rerandomized puzzle.
    \end{proof}
    
    \vspace{-2mm}
}
\revat{We will integrate this puzzle primitive into the offloading scheme, enabling the ES to register at the BS and allowing users to randomly select an indistinguishable puzzle from the BS's puzzle list to ensure fairness in offloading.}

\vspace{-4mm}
\subsection{System Initialization}

\noindent 
The system initialization phase initializes all parameters and components required for the system, as shown in Algorithm~\ref{alg:setup}.

\begin{algorithm}[tt] \label{setup}

    \caption{System Initialization}
    \label{alg:setup}
    \vspace{-2mm}
    \footnotesize
    \begin{multicols}{2}
    \LinesNumbered
    
    \nonl \textbf{\makebox[\linewidth]{\underline{At Service Provider}}} \\

     $\text{BlindSetup}(1^\lambda) \!\rightarrow\! (pk_s,sk_s)$\; \label{line:blind_setup_1} 
    $\text{SymKeySetup}(1^\lambda) \rightarrow k_s$\; \label{line:sym_setup}

    \nonl \textbf{\makebox[\linewidth]{\underline{At User}}} \\
    Get blind signature public key $pk_s$ \\\nonl out-of-band from the SP\;   \label{line:user_out_of_band}

    \nonl \textbf{\makebox[\linewidth]{\underline{At Financial Authority}}} \\
    $\text{BlindSetup}(1^\lambda) \rightarrow (pk_p,sk_p)$\; \label{line:blind_setup_2} 
    $\text{PuzzleSetup}(1^\lambda) \rightarrow params$. \label{line:puzzle_setup}
    \end{multicols}
    \vspace{0.6mm}
\end{algorithm}

\noindent \textbf{Service setup {(line~\ref{line:sym_setup})}.} 
The SP sets up the service key $k_s$ that will be used for symmetric encryption of service data. A symmetric encryption scheme has three algorithms, SymKeySetup, SymEnc and SymDec. 
We employ symmetric encryption for the serviced data to reduce the overhead of encrypting and decrypting. 
Alternative encryption schemes can be used if they are compatible with the service data.
\revct{Due to the lightweight nature of our framework, the service key can be efficiently updated periodically based on security requirements or manually rotated upon detecting anomalies.}

\noindent \textbf{Blind signature setup (lines~\ref{line:blind_setup_1}, \ref{line:user_out_of_band}$-$\ref{line:blind_setup_2}).}
Blind signature~\cite{fuchsbauer2020blind} is employed to preserve user anonymity during token usage and maintain confidentiality of the service type from the BS, while still enabling service type verification by the ES and FA.

\grey{Each token includes a service-agnostic part for the BS and a service-specific part for the ES and FA, supported by blind signature setups from the SP (line~\ref{line:blind_setup_1}) and FA (line~\ref{line:blind_setup_2}).}
Regarding the service-specific part, 
the service blind signature secret key $sk_s$ is securely kept confidential by the FA after the SP registers with it, as will be shown in Algorithm~\ref{alg:registration}.
And the service blind signature public key $pk_s$ is only accessible to authorized users and ESs who can access the service $s$ (line~\ref{line:user_out_of_band}).
Regarding the service-agnostic part, 
the 
FA invokes $\text{BlindSetup}$ and 
the public key $pk_{p}$ is made publicly available, allowing the BS to verify the service-agnostic part of tokens.

\noindent \textbf{Rerandomizable puzzle setup (line~\ref{line:puzzle_setup}).}
The FA sets up the parameters of rerandomizable puzzle.
For puzzle based on the bilinear map, both $g_1$ and $g_2$ can be published. 
\revct{For puzzle based on universal re-encryption, only $y$ can be made public, while the corresponding $x$ is obtained by authorized users.}

\vspace{-6mm}
\subsection{Registration}

\noindent 
\grey{Registration phase (steps (1)$-$(3) in Fig.~\ref{model}) is in Algorithm~\ref{alg:registration}.}

\noindent \textbf{SP registration (line~\ref{line:SP_register}).}
An SP registers the service symmetric key $k_s$, the service blind signature public key $pk_s$ and secret key $sk_s$ with the FA.
Then the FA can issue and verify the validity of service tokens for service $s$ by using these keys.

\noindent \textbf{Token registration (lines~\ref{line:user_register_start}$-$\ref{line:user_register_end}).}
To access a service $s$, 
a user first needs to acquire tokens for $s$ from the FA, which needs to include both a service-agnostic part for the BS, and a service-specific part for the ES and FA.
\grey{To request a token for service $s$, the user selects random messages $m_1$, $m_2$, and a random factor $r$, then blinds  $m_1$ and $m_2$ using $\text{BlindMsg}$ to generate  blind messages (lines~\ref{line:user_register_start}$-$\ref{line:user_blind_message_end}).}
Then the user sends blinded messages $m'_1$ and $m'_2$ along with payment information to the FA (line~\ref{line:user_send_blinded_message_to_FA}).
After verifying the payment, the FA signs $m'_1$ and $m'_2$ by invoking $\text{BlindSign}$.
Then the FA sends the blinded signatures and the service key $k_s$ back to the user (lines~\ref{line:FA_sign_token_start}$-$\ref{line:FA_sign_token_end}).
After invoking $\text{UnblindSign}$, the user gets a valid $token$ in the format of $(m_1,sig_1;m_2,sig_2)$ (lines~\ref{line:user_get_valid_token_begin}--\ref{line:user_get_valid_token}).

\begin{algorithm}[tt] \label{registration}
    \vspace{-2mm}
    \footnotesize
    \begin{multicols}{2}
    \caption{Registration}
    \label{alg:registration}
    \LinesNumbered
    \SetNoFillComment
    \tcc{SP registration}
    \nonl \textbf{\makebox[\linewidth]{\underline{At Service Provider}}} \\
    Register with FA by sending $k_s$ and $(pk_s,sk_s)$ to the FA. \label{line:SP_register}

    \tcc{Token registration}
    \nonl \textbf{\makebox[\linewidth]{\underline{At User}}} \\
    Select two random messages $m_1$ and $m_2$, and a random number $r$\; \label{line:user_register_start}

    $\text{BlindMsg}(pk_p, m_1, r) \rightarrow m'_1$\;

    $\text{BlindMsg}(pk_s, m_2, r) \rightarrow m'_2$\;\label{line:user_blind_message_end}

    Send request $(s\_type, m'_1,m'_2,$ $payment)$  to the FA\;\label{line:user_send_blinded_message_to_FA}

    \nonl \textbf{\makebox[\linewidth]{\underline{At Financial Authority}}} \\

    \If{registration request is valid   \label{line:FA_sign_token_start}}
    {
    $\text{BlindSign}(sk_p, m'_1) \rightarrow sig'_1$\;
    $\text{BlindSign}(sk_s, m'_2) \rightarrow sig'_2$\;
    Send  $(sig'_1,$ $sig'_2,k_s)$  to  user\; \label{line:FA_sign_token_end}
    }

    \nonl \textbf{\makebox[\linewidth]{\underline{At User}}} \\

    $\text{UnblindSign}(pk_p, sig'_1, r) \!\rightarrow\! sig_1$\;\label{line:user_get_valid_token_begin}
    $\text{UnblindSign}(pk_s, sig'_2, r) \!\rightarrow\! sig_2$\;
    $token =(m_1,sig_1;m_2,sig_2)$;\label{line:user_get_valid_token} \label{line:user_register_end}

    \tcc{ES reg. to SP}
    \nonl \textbf{\makebox[\linewidth]{\underline{At Edge Server}}} \\
    
    Send $(reg\_info, s\_type)$ to SP\; \label{line:ES_register_start}

    \nonl \textbf{\makebox[\linewidth]{\underline{At Service Provider}}} \\

    \If{ES eligibility is verified}{
        Send $(k_s,pk_s,s\_alg)$ to ES\;\label{line:ES_register_at_SP_end}
    }

    \tcc{ES reg. to BS}
    
    \nonl \textbf{\makebox[\linewidth]{\underline{At Edge Server}}} \\

    \For{registered service $s$  \label{line:ES_start_register_at_BS}}{
 
    $\text{PuzzleGen}(h(k_s)) \!\rightarrow\! puzzle$\;
    
    Send $(puzzle, {\sf ID}_{BS})$ to BS\;
    }

    \nonl \textbf{\makebox[\linewidth]{\underline{At Base Station}}} \\
    
    \While{$puzzle$ from ${\sf ID}_{ES}$}
    {
    Store $puzzle$ in puzzle list $Z$\;
    Insert $(puzzle, {\sf ID}_{ES})$ into a mapping table $p\_map$. \label{line:ES_register_end}
    }

    \end{multicols}
    \vspace{0.4mm}
\end{algorithm}

\noindent \textbf{ES registration to SP (lines~\ref{line:ES_register_start}$-$\ref{line:ES_register_at_SP_end}).}
The ES first generates a registration request and sends it to the SP.
The SP checks service type and verifies if the ES with $reg\_info$ can provide the service $s$.
\grey{If the ES is eligible, the SP sends back service key $k_s$, blind signature key $pk_s$, and service program $s\_alg$.}

\noindent \textbf{ES registration to BS  (lines~\ref{line:ES_start_register_at_BS}$-$\ref{line:ES_register_end}).}
\grey{After receiving the service information, each ES registers with the BS as a candidate to provide service.}
\grey{This process enables ES discovery and equips users with information for fair ES selection.}
For registered service $s$, the ES generates a puzzle by invoking $\text{PuzzleGen}(h(k_s))$, 
where $h(\cdot)$ is a one-way hash function to protect the service key.
Upon receiving the puzzles from ESs, 
the BS stores puzzles in list $Z$ and creates a mapping table $p\_map$ associating each puzzle with ES identity ${\sf ID}_{ES}$.

\revct{To adapt to varying ES capabilities, fairness can be extended by allowing higher-capability ESs to register multiple puzzles, increasing their selection likelihood proportionally to their capacity while maintaining efficiency and security.}

\vspace{-4.0mm}
\subsection{Offloading}

\noindent
Algorithm~\ref{alg:offloading} shows the detailed offloading phase, which corresponds to steps (4)$-$(6) in Fig. \ref{model}.

\begin{algorithm}[tt] \label{alg:offloading}
    \vspace{-2mm}
    \footnotesize
    \begin{multicols}{2}
    \caption{Offloading}
    
    \LinesNumbered
    \SetAlgoHangIndent{0em}

    \nonl \textbf{\makebox[\linewidth]{\underline{At User}}} \\
   
    Generate offloading request $(token,$ ${\sf ID}_{BS})$ and send it to  BS\;\label{line:offloading_start}

    \nonl \textbf{\makebox[\linewidth]{\underline{At Base Station}}} \\

    Form puzzle list $Z_{{\sf ID}_u}$ with the latest versions for user ${\sf ID}_u$\; \label{line:BS_construct_init_puzzle_list}

        \For {$z \in Z_{{\sf ID}_u}$\label{line:BS_rerandomize_puzzle_start}}
    {
        $\text{PuzzleRerandomize}(z) \rightarrow z'$\;
        Replace $z$ with $z'$\;
        Record $(z',{\sf ID}_{ES})$ in $p\_map$\;
    }
    Permute $Z_{{\sf ID}_u}$ to a new puzzle list $Z_{{\sf ID}_u}'$ and send it to user\; \label{line:BS_rerandomize_puzzle_end}

    \nonl \textbf{\makebox[\linewidth]{\underline{At User}}} \\
    Candidate puzzle list $Z_c = \emptyset$\; \label{line:user_puzzle_match_start}

    \For{$z \in Z_{{\sf ID}_u}'$ and $\normalfont \text{PuzzleMatch}(h(k_s),z) = 1$}
    {
            $Z_c = Z_c \cup \{z\}$\;
    }
    Randomly pick a puzzle $z_u\in Z_c$\; \label{line:user_puzzle_match_end}
    Send $(z_u,ct = \text{SymEnc}(k_s, $ $(s\_type, data)))$ to BS\; \label{line:user_send_request_data_to_BS}

    \nonl \textbf{\makebox[\linewidth]{\underline{At Base Station}}} \\

     \If{$\normalfont \text{BlindVerify}(pk_p, m_1, sig_1)=1$  \textbf{and} $token$ is unseen\label{line:BS_check_double_spend}}
    {
        \hspace{-2.5mm}\If{$z_u \in Z_{{\sf ID}_u}'$ and none of puzzles in $Z_{{\sf ID}_u}'$ have been used}
        {Send $(token, ct)$ to ES according to $p\_map$\; \label{line:BS_forward_to_ES}}

    }

    \nonl \textbf{\makebox[\linewidth]{\underline{At Edge Server}}} \\

     \For{$\forall pk_s$ held by the ES\label{line:ES_find_corresponding_service_key_start}}
    {
        \hspace{-2.5mm}\If{$\normalfont \text{BlindVerify}(pk_s, m_2, sig_2)$ $=1$}{
            \label{line:ES_find_corresponding_service_key_end}
            $\text{SymDec}(k_s, ct) \rightarrow data$\; \label{line:ES_settle_service_type}
            {
                Send $resp = \text{SymEnc}(k_s,$ $s\_alg(data))$ to BS\; \label{line:ES_generate_response_encrypt}
                
            }
            \textbf{break}\;
        }
    }

    \nonl \textbf{\makebox[\linewidth]{\underline{At Base Station}}} \\
    Forward the $resp$ to user\; \label{line:BS_forward_response}

    \nonl \textbf{\makebox[\linewidth]{\underline{At User}}} \\
    $\text{SymDec}(k_s, resp) \rightarrow resp\_data$. \label{line:user_get_service_data}
    \end{multicols}
\end{algorithm}

\grey{The user initiates offloading by sending a request to the BS (line~\ref{line:offloading_start}).}
Upon receiving the $init\_request$ from the user with ${\sf ID}_U$, 
the BS performs the following actions (lines~\ref{line:BS_construct_init_puzzle_list}$-$\ref{line:BS_rerandomize_puzzle_end}): 
the BS first constructs a puzzle list $Z_{{\sf ID}_u}$ that contains all the latest version puzzles.
Then the BS re-randomizes the puzzles $z \!\in\! Z_{{\sf ID}_u}$ by invoking $\text{PuzzleRerandomize}(z) \!\rightarrow\! z'$ and replaces $z$ with $z'$. Also, the BS records $(z',{\sf ID}_{ES})$ in $p\_map$.
\grey{The BS sends the permuted puzzle list $Z_{{\sf ID}_u}'$ to the user.}
\revct{The BS re-randomizes and permutes puzzles to ensure fairness, preventing users from targeting specific ESs. 
While acting as a man-in-the-middle, it cannot infer service types, with protocol adherence incentivized by its reliance on reputation.
}

The user, upon receiving $Z_{{\sf ID}_U}'$, proceeds with the following steps (lines~\ref{line:user_puzzle_match_start}$-$\ref{line:user_send_request_data_to_BS}):
for each puzzle received from the BS, the user matches it with the service key for the desired service, constructing a sub-list $Z_c$ of matching puzzles.
The user then randomly selects one puzzle $z_u$ from $Z_c$.
The user encrypts request $data$ with service key $k_s$ to get the ciphertext $ct$, and constructs a message $(z_u, ct)$ which is then sent to the BS.

Upon receiving the user's offloading request, 
the BS performs the following steps  (lines~\ref{line:BS_check_double_spend}$-$\ref{line:BS_forward_to_ES}):
the BS checks the validity of the service-agnostic part of the token $(m_1, sig_1)$ by invoking $\text{BlindVerify}$ and ensuring full token has not been used before.
The BS validates $z_u$ by confirming its presence in the unique puzzle list $Z_{{\sf ID}_U}'$ provided to user ${\sf ID}_U$ and ensuring that none of the puzzles in $Z_{{\sf ID}_U}'$ have been used before.
The BS then finds the ES corresponding to $z_u$ in its mapping table $p\_map$, and forwards $(token,ct)$ to the ES.

Then the ES verifies the service-specific part of the token by invoking $\text{BlindVerify}(pk_s, m_2, sig_2)$.
For ESs offering multiple services, they need to check the service blind signature public key, $pk_s$, associated with each service type to find the corresponding service key, $k_s$ (lines~\ref{line:ES_find_corresponding_service_key_start}$-$\ref{line:ES_find_corresponding_service_key_end}).
If the token check passes, the ES decrypts $ct$ (line~\ref{line:ES_settle_service_type})
and 
continues generating response data for the user using service algorithm $s\_alg$ on $data$.
The ES encrypts the response data with $k_s$ and sends it to the BS (line~\ref{line:ES_generate_response_encrypt}). The BS forwards the encrypted $resp$ to the user (line~\ref{line:BS_forward_response}), allowing the user to decrypt it with $k_s$ (line~\ref{line:user_get_service_data}).
The user and the ES then engage in actual service offloading through the BS until the offloading request is fulfilled.

\vspace{-4mm}
\subsection{Payment Claim}

\begin{algorithm}[tt] \label{alg:audit}
    \caption{Payment Claim}
    \vspace{-2mm}
    \footnotesize
    \begin{multicols}{2}
    \LinesNumbered
    \DontPrintSemicolon

    \While{$(s\_type,token)$ from ES \label{line:ES_token_claim_start}}
    {
    \hspace{-2.5mm}\lIf{$\normalfont \text{BlindVerify}(pk_s, m_2, sig_2)=1$ \textbf{and} token valid for $s\_type$}
            {FA pays to ES and SP;  \label{line:ES_token_claim_end}}
    }

    \While{$(token)$ from BS\label{line:BS_token_claim_start}}
    {
    \hspace{-2.5mm}\lIf{$\normalfont \text{BlindVerify}(pk_p, m_1, sig_1) = 1$ \textbf{and} token not double spent}
        {FA pays to  BS. \label{line:BS_token_claim_end}}
    }

    \end{multicols}
    \vspace{2mm}
\end{algorithm}

\noindent
The payment claim process in Fig.~\ref{model} step (7) is shown in Algorithm~\ref{alg:audit}.
The process is the same for a BS or an ES, except that the public key used to verify the blind signature is different, and for ES the checking needs to additionally verify $s\_type$. 
Upon receiving a token claim request, the FA first checks whether the token has been double spent. 
It then proceeds to verify the token's validity by invoking the function 
$\text{BlindVerify}$.
Upon successful token verification, the FA pays the corresponding tokens to the SP, BS, and ES as per the established contract, ensuring accountability for token claims.

\revct{In practice, beyond presenting a valid $token$ as payment proof, the BS and ES can incorporate other types of proof of service to claim rewards. For instance, they can utilize existing edge service verification schemes that leverage cryptography to generate tamper-proof service proofs~\cite{wang2024veriedge}.}



\vspace{-2mm}
\section{Security Analysis}
\label{sec:security-analysis}
\noindent

\vspace{-5mm}
\subsection{Informal Security Analysis}
\noindent
\noindent
\textbf{Malicious user.}
\grey{Upon receiving the user's offloading request, the BS validates the token's service-agnostic part for FA signature and checks both token parts for prior use.}
Only when the check on $(m_1, sig_1)$ returns valid and the full token has not been seen before, will the BS proceed to construct a puzzle list and share it with the user. 
Full token checking ensures that the user cannot reuse a forged token, for instance, by combining a low-priced BS service-agnostic part with a high-priced service-specific part to double spend the token.
The BS's re-randomization and permutation of the puzzle list
ensure the unlinkability within one round and different rounds of the offloading, so that the user cannot identify a specific ES to disturb the fairness of the offloading process.
Stale puzzle submissions are discarded, thwarting puzzle replay attacks.

\noindent
\textbf{Curious BS on user service request.}
\grey{SA$^2$FE prevents a curious BS from inferring a user's service type by limiting access to sensitive information. The BS can only verify the service-agnostic part of the token, which reveals no service details. User data is encrypted with the service symmetric key, preventing access to the service type or request data. Additionally, puzzles from ESs are indistinguishable and disclose no service-related information.}
\grey{Since the user performs puzzle matching and selection, the BS cannot infer the service type from the list of eligible ESs or from specific ES involvement.}

\noindent
\textbf{Curious FA, SP, BS, and ES  on user identity.}
Due to the blind signature's blindness properties, no one can link the token in a service request to the user's real identity.

\noindent
\textbf{Malicious BS and ES on payment claim.}
If the BS or an ES exaggerates the provided service for extra rewards, the FA will detect invalid or double-spent tokens and reject the claim.

\revat{For a puzzle list of $n$ puzzles, the probability of a malicious user identifying a specific ES is $\frac{1}{n}$. The attack success probability of other attacks within our threat model are negligible unless they violate fundamental cryptographic assumptions.}

\vspace{-4mm}
\subsection{Formal Security Analysis}
\noindent
We next formally analyze SA$^2$FE's security based on the UC framework.
The UC is a widely used simulation-based cryptographic framework for modular security analysis in diverse scenarios, including blockchain~\cite{kate2023flexirand}, federated learning~\cite{hao2023robust}, and quantum key distribution~\cite{ben2005universal}.
It guarantees security even when a secure protocol is composed with an arbitrary set of protocols~\cite{canetti2001universally}.
The definition of UC-security is as follows: 
\begin{definition} \label{def:uc-security}
    \textbf{UC-security~\cite{canetti2001universally}}. Given a security parameter $\lambda$, an ideal functionality $\mathcal{F}$ and a real world protocol $\pi$, we say that $\pi$ securely realizes $\mathcal{F}$ if for any probabilistic polynomial time (PPT) adversary $\mathcal{A}$,  there exists a PPT simulator $\mathcal{S}$ such that for any PPT environment $\mathcal{Z}$, we have 
    $$\mathrm{IDEAL}_{\mathcal{F},\mathcal{S},\mathcal{Z}}\stackrel{\mathrm{c}}{\equiv} \mathrm{REAL}_{\mathcal{\pi},\mathcal{A},\mathcal{Z}}.$$
     The $\stackrel{\mathrm{c}}{\equiv}$ denotes computational indistinguishable.\myendbox
\end{definition}

We denote the ideal functionality of SA$^2$FE as $\mathcal{F}_{\sf{SA^2FE}} = \langle \mathcal{F}_{\sf{register}}, \mathcal{F}_{\sf{offload}}, \mathcal{F}_{\sf{claim}}, \mathcal{F}_{\sf{sig}}, 
\mathcal{F}_{\sf{smt}} \rangle$.
    $\mathcal{F}_{\sf{register}}$ is the ideal functionality of registration phase. 
    $\mathcal{F}_{\sf{offload}}$ models the offloading phase.
    $\mathcal{F}_{\sf{claim}}$ models the token claim process.
\grey{Two helper ideal functionalities $\mathcal{F}_{\sf{sig}}$ and $\mathcal{F}_{\sf{smt}}$~\cite{canetti2001universally} are used to model the digital signature and the secure message transmission channel.}
Following UC framework, we assume that each party interacting with the ideal functionalities has a unique identifier,
and consider a static corruption model where the adversary can corrupt parties at the beginning of the protocol.

\begin{figure}[tt]
\fbox{
\begin{minipage}{0.46\textwidth}
\footnotesize
\textbf{Functionality $\mathcal{F}_{\sf{register}}$}

\centering \underline{\textit{Service provider registration}}
\begin{enumerate}
    \item Upon receiving $(\text{register}, spid, sname, sdata)$ from the SP, $\mathcal{F}_{\sf{register}}$ adds $t_{s} = (spid, sname, sdata, \perp, \perp)$ to $T_{\sf s}$.
    If the $t_{s}$ is already in $T_{\sf s}$, then $\mathcal{F}_{\sf{register}}$ returns $t_{s}$ to the SP.
\end{enumerate}

\centering \underline{\textit{Edge server registration}}

\begin{enumerate}
    \item Upon receiving $(\text{register}, spid, sname, esid)$ from ES, 
    $\mathcal{F}_{\sf{register}}$ checks if $T_{\sf s}$ has an entry $t_{s}\!=\! (spid,sname,\cdot,esid,\cdot)$.
    If yes,
    $\mathcal{F}_{\sf{register}}$ returns $t_{s}$ to $esid$,
    and forwards $(\text{exist},t_{s})$ to $\mathcal{S}$. 
    Otherwise,
    $\mathcal{F}_{\sf{register}}$ sends a message $(\text{register},spid,esid)$ to the SP with $spid$.
    If the SP responds with ``allow'', then $\mathcal{F}_{\sf{register}}$ creates an entry $(spid, sname,\cdot,esid,\cdot)$ in $T_{\sf s}$ and forwards $(\text{successReg},(spid,sname,esid))$ to $esid$ and $\mathcal{S}$.
    Otherwise, $\mathcal{F}_{\sf{register}}$ returns ``fail'' to $esid$ and forwards $(\text{failReg},(spid,sname,esid))$ to $\mathcal{S}$.

    \item Upon receiving $(\text{register},t_{p}=(\cdot,spid,sname,0,\textsf{unused},esid,$ $bsid),bsid)$ from ES, 
    $\mathcal{F}_{\sf{register}}$ forwards  $(\text{register},esid,$ $bsid)$ to  BS with $bsid$.
    If BS responds with ``allow'', then $\mathcal{F}_{\sf{register}}$ updates the entry $(spid, sname,\cdot,esid,bsid)$ in $T_{\sf s}$ and forwards $(\text{successReg},(spid,sname,bsid,esid))$ to $esid$ and $\mathcal{S}$.
    Otherwise, $\mathcal{F}_{\sf{register}}$ returns ``fail'' to $esid$ and forwards $(\text{failReg},(spid,sname,bsid,esid))$ to $\mathcal{S}$.

    \item 
    $\mathcal{F}_{\sf{register}}$ adds $t_{p}=(puzzle_{\sf ideal},spid,sname,0,\textsf{unused},esid,bsid)$ to $T_{\sf p}$, 
    and 
    forwards $(\text{newReg},t_{p})$ to $\mathcal{S}$.
\end{enumerate}

\centering \underline{\textit{User registration}}
\begin{enumerate}
    \item Upon receiving $(\text{register}, spid, sname, payment)$ from user, $\mathcal{F}_{\sf{register}}$ sends a message $(\text{register},spid,sname,payment)$ to FA.
    \item If FA returns ``allow'',  $\mathcal{F}_{\sf{register}}$ adds $t_{\sf t} = (token_{\sf ideal},spid,sname,$ $(\perp,\textsf{fresh}),(\perp,\textsf{fresh}))$ in $T_{\sf t}$.
    Then $\mathcal{F}_{\sf{register}}$ sends the $t_{\sf t}$ to the user and forwards $(\text{newReg},t_{\sf t})$ to $\mathcal{S}$.
    \item 
    Otherwise,  
    $\mathcal{F}_{\sf{register}}$ sends ``fail" to user
    and sends $(\text{failReg},t_{\sf t})$ to $\mathcal{S}$.
\end{enumerate}

\end{minipage}

}
\vspace{0.2em}
\caption{Ideal functionality for registration.}
\label{register_ideal}
\end{figure}

The ideal functionality 
$\mathcal{F}_{\sf{SA^2FE}}$ maintains the internal states in three tables, $T_{\sf s}$, $T_{\sf p}$ and $T_{\sf t}$, to ensure consistency of the real world and the ideal world.
$T_{\sf s}$ consists of entries in the format of $(spid, sname, sdata,esid, bsid)$ about the service. 
The $spid$ denotes the SP identity, $sname$ represents the service name, $sdata$ contains the content or data associated with the service, 
and $esid$ and $bsid$ uniquely identify an ES and a BS respectively.
$T_{\sf p}$ contains puzzle information in the format of $(puzzle_{\sf ideal},spid,sname,ver,f_p,esid,$ $bsid)$. The $puzzle_{\sf ideal}$ is a puzzle in the ideal world. 
$ver$
is a number that indicates the puzzle's version that identifies the set of puzzles corresponding to a user's request. 
The puzzles generated in the same batch have the same version number. 
$f_p \!\in\! \{\textsf{unused},\textsf{used} \}$ is a flag indicating whether the puzzle has been received by the BS from a user,
$esid$ is the ES identity that the puzzle corresponds to, and $bsid$ is the BS identity that the $esid$ is registered with.
$T_{\sf t}$ contains the token information in the format of $(token_{\sf ideal},spid,sname,(esid,f_{es}),(bsid,f_{bs}))$. 
The $token_{\sf ideal}$ is a string that indicates the token in the ideal world. 
$esid$ is the ES identity that received the token and  $f_{es}$ is the flag that indicates the status of the token. 
$f_{es}$ has three options: $\textsf{fresh}$ for a newly initialized, unused token, $\textsf{unclaimed}$ for token received but not yet claimed, and $\textsf{claimed}$ for token already claimed.
$(bsid$, $f_{bs})$ is similarly defined for a BS.


\noindent \textbf{Registration.} \grey{The ideal functionality $\mathcal{F}_{\textsf{register}}$, shown in Fig.~\ref{register_ideal}, handles registration for the SP, BS, and ES by creating entries for services, puzzles, and tokens, ensuring validity and freshness while coordinating with other parties.}

\begin{figure}[t]
\fbox{
\begin{minipage}{0.46\textwidth}
\footnotesize
\textbf{Functionality $\mathcal{F}_{\sf{offload}}$}
\begin{enumerate}
    \item Upon receiving a request $(\text{offRequest},token_{\sf ideal},bsid)$ from the user,  $\mathcal{F}_{\sf{offload}}$ constructs a $p\_list$ which includes all the puzzle filed of the puzzles with $(\cdot,\cdot,\cdot,ver_{\sf newest},\cdot,\cdot,bsid)$ in the $T_{\sf p}$.

    \item 
    $\mathcal{F}_{\sf{offload}}$ re-randomizes each puzzle in $p\_list$ by generating a new puzzle string $puzzle_{\sf new}$, 
    and creates a new entry $t_{p} = (puzzle_{\sf new},spid,sname,ver_{\sf newest}+1,\textsf{unused},esid,bsid)$ in $T_{\sf p}$. $\mathcal{F}_{\sf{offload}}$ sends $(\text{randomizedPuzzle},puzzle_{\sf ideal},puzzle_{\sf new})$ to $\mathcal{S}$.
    \item  $\mathcal{F}_{\sf{offload}}$ constructs a $p\_list'$ for user by collecting all puzzles with $ver_{\sf newest}+1$ in a random order. 
    \red{If the user is malicious,}
    $\mathcal{F}_{\sf{offload}}$ sends $p\_list'$ to $bsid$
    and 
    forwards $(\text{offStart},token,p\_list')$ to $\mathcal{S}$.
    \red{If the user is honest, $\mathcal{F}_{\sf{offload}}$ associates each puzzle in the $p\_list'$ with its corresponding $(spid, sname)$ and sends the list to the user.}
    $\mathcal{F}_{\sf{offload}}$ stores the mapping between $uid$ and the corresponding $ver^*$.

    \item Upon receiving response from the user with $(puzzle_{\sf ideal}',M_{\sf data})$, 
    $\mathcal{F}_{\sf{offload}}$ 
    sends $(\text{newRequest},$ $puzzle_{\sf ideal}',M_{\sf data})$ to $\mathcal{S}$.
    $\mathcal{F}_{\sf{offload}}$ sends $(\text{userAbort},uid)$ to $\mathcal{S}$ when there is no  response.

    \item 
    $\mathcal{F}_{\sf{offload}}$ checks if there is an entry $t_{\sf t} = (token_{\sf ideal}, spid,sname,$ $ (\perp,\textsf{fresh}),(\perp,\textsf{fresh}))$ in $T_{\sf t}$.
    If no such $t_{\sf t}$ exist,  $\mathcal{F}_{\sf{offload}}$ returns ``fail" to user 
    and forwards $(\text{invalidToken},token_{\sf ideal}, bsid)$ to $\mathcal{S}$.
   
    \item If there is such an entry $t_{\sf t}$, then $\mathcal{F}_{\sf{offload}}$ sends $(\text{newRequest},$ $puzzle_{\sf ideal}',M_{\sf data})$ to $bsid$ and $\mathcal{S}$.
    $\mathcal{F}_{\sf{offload}}$ verifies the validity of $puzzle_{\sf ideal}'$ within $p\_list'$. 
    If not, $\mathcal{F}_{\sf{offload}}$ returns ``fail" to user
    and forwards $(\text{invalidPuzzle},puzzle_{\sf ideal}',M_{\sf data})$ to $\mathcal{S}$.
    \item If $puzzle_{\sf ideal}'$ is valid and $f_p$ is $\textsf{unused}$, $\mathcal{F}_{\sf{offload}}$ retrieves the $esid$ corresponding to $puzzle_{\sf ideal}'$ and updates the entries with $ver^*$ to $(\cdot,\cdot,\cdot,ver^*,\textsf{used},esid,bsid)$. Then $\mathcal{F}_{\sf{offload}}$ updates the $T_{\sf t}$ with $(token_{\sf ideal}, spid,sname, (esid,\textsf{fresh}),(bsid,\textsf{unclaimed}))$.
    Otherwise, $\mathcal{F}_{\sf{offload}}$ returns ``fail" to user. And $\mathcal{F}_{\sf{offload}}$ forwards $(\text{invalidPuzzle},puzzle_{\sf ideal}',M_{\sf data})$ to $\mathcal{S}$.
    \item  $\mathcal{F}_{\sf{offload}}$ sends $(token_{\sf ideal}, M_{\sf data})$ to $esid$ and sends $(\text{offToES},$ $token_{\sf ideal},esid,M_{\sf data})$ to $\mathcal{S}$.
    $\mathcal{F}_{\sf{offload}}$ updates the $T_{\sf t}$ with $(token_{\sf ideal},spid,sname, (esid,\textsf{unclaimed}),(bsid,\textsf{unclaimed}))$.
    \item On receiving $(token_{\sf ideal}, M_{\sf data})$ from $\mathcal{F}_{\sf{offload}}$, $esid$ retrieves $sdata$ from $\mathcal{F}_{\sf{offload}}$ 
    and sends $M_{\sf resp} \!\leftarrow\! sdata(M_{\sf data})$ to $\mathcal{F}_{\sf{offload}}$.

    \item $\mathcal{F}_{\sf{offload}}$ forwards the response $M_{\sf resp}$ to the user, $bsid$ and $\mathcal{S}$.

\end{enumerate}

\end{minipage}

}
\vspace{0.2em}
\caption{Ideal functionality for offloading.}
\label{offloading_ideal}
\end{figure}

\begin{figure}[t]
\fbox{
\begin{minipage}{0.46\textwidth}
\footnotesize
    \textbf{Functionality $\mathcal{F}_{\sf{claim}}$}
    \begin{enumerate}
        \item Upon receiving $(\text{claimRequest},bsid,token_{\sf ideal})$ from $bsid$, $\mathcal{F}_{\sf{claim}}$ checks if there is an entry $t_{\sf t} = (token_{\sf ideal}, \cdot,\cdot,(\cdot,\cdot),(bsid,$ $\textsf{unclaimed}))$ in $T_{\sf t}$. 
        If no such $t_{\sf t}$ exist, then $\mathcal{F}_{\sf{claim}}$ returns ``fail" to $bsid$ and forwards $(\text{invalidToken},token_{\sf ideal},bsid)$ to $\mathcal{S}$.
        Otherwise, $\mathcal{F}_{\sf{claim}}$ sets the $f_{bs}$ of $t_{\sf t}$ to $\textsf{claimed}$ and returns ``success" to $bsid$.
        Also, $\mathcal{F}_{\sf{claim}}$ forwards $(\text{successClaimed},token_{\sf ideal},bsid)$ to $\mathcal{S}$.

        \item Upon receiving $(\text{claimRequest},esid,spid,sname, token_{\sf ideal})$ from $esid$, $\mathcal{F}_{\sf{claim}}$ checks if there is an entry $t_{\sf t} \!=\! (token_{\sf ideal}, spid,$ $sname,(esid,\textsf{unclaimed}),(\cdot,\cdot))$ in $T_{\sf t}$. 
        If not,
        $\mathcal{F}_{\sf{claim}}$ returns ``fail" to $esid$ and forwards $(\text{invalidToken},$ $token_{\sf ideal},esid)$ to $\mathcal{S}$.
        Otherwise, $\mathcal{F}_{\sf{claim}}$ sets the $f_{es}$ of $t_{\sf t}$ to $\textsf{claimed}$ and returns ``success" to $esid$.
        Also, $\mathcal{F}_{\sf{claim}}$ forwards $(\text{successClaimed},token_{\sf ideal},esid)$ to $\mathcal{S}$.

    \end{enumerate}
\end{minipage}

}
\vspace{0.2em}
\caption{Ideal functionality for payment claim.}
\label{audit_ideal}
\end{figure}

\noindent \textbf{Offloading.}
\noindent The ideal functionality shown in Fig.~\ref{offloading_ideal} models the offloading process between the user, BS and ES.
\grey{The functionality validates tokens and checks if selected puzzles are valid entries in $T_{\sf p}$. Invalid tokens or puzzles result in a failure message to the user and a notification to $\mathcal{S}$. For valid requests, it updates the status of tokens and puzzles, marks puzzles as used, changes the status of tokens, and manages mappings in $T_{\sf t}$ and $T_{\sf p}$. Finally, $\mathcal{F}_{\textsf{offload}}$ forwards the service response $M_{\sf resp}$ to the user, BS, and $\mathcal{S}$.}

\noindent \textbf{Payment Claim.}
\grey{Fig.~\ref{audit_ideal} shows the payment claim ideal functionality, where $\mathcal{F}_{\sf{claim}}$ verifies token validity, prevents double spending by checking $f_{bs}$ or $f_{es}$ status, rewards $bsid$ or $esid$, and updates $f_{bs}$ or $f_{es}$ to mark the token as claimed.}


\begin{theorem} \label{thm:UC}
    Let $\mathcal{A}$ and $\mathcal{S}$ be a PPT adversary and a simulator in the real world and the ideal world, respectively.
    SA$^2$FE securely realizes $\mathcal{F}_{\textsf{SA$^2$FE}}$ for any PPT environment $\mathcal{Z}$.
\end{theorem}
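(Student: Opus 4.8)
The plan is to prove Theorem~\ref{thm:UC} by the standard UC simulation paradigm: exhibit a PPT simulator $\mathcal{S}$ that, interacting only with the ideal functionality $\mathcal{F}_{\sf{SA^2FE}}$ and the adversary $\mathcal{A}$, produces a view for the environment $\mathcal{Z}$ that is computationally indistinguishable from the real execution of the protocol with $\mathcal{A}$. Since $\mathcal{F}_{\sf{SA^2FE}}$ decomposes into $\langle \mathcal{F}_{\sf{register}}, \mathcal{F}_{\sf{offload}}, \mathcal{F}_{\sf{claim}}, \mathcal{F}_{\sf{sig}}, \mathcal{F}_{\sf{smt}}\rangle$, I would first invoke the UC composition theorem to treat $\mathcal{F}_{\sf{sig}}$ and $\mathcal{F}_{\sf{smt}}$ as available hybrid functionalities, so that it suffices to build simulators for the registration, offloading, and claim phases in the $(\mathcal{F}_{\sf{sig}},\mathcal{F}_{\sf{smt}})$-hybrid model and argue they compose. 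I would handle the static corruption cases by branching $\mathcal{S}$'s behavior on which of the user, BS, ES, SP, FA are corrupted; the global parties (FA, SP) are assumed honest per the threat model, so the interesting branches are corrupt-user, corrupt-BS, corrupt-ES, and combinations thereof.

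The core of $\mathcal{S}$ is a set of simulated transcripts. For \emph{registration}, $\mathcal{S}$ must produce simulated puzzles and tokens: when $\mathcal{F}_{\sf{register}}$ emits $(\text{newReg},t_p)$ or $(\text{newReg},t_{\sf t})$, $\mathcal{S}$ runs the real $\text{PuzzleGen}$ on a dummy solution and the real $\text{BlindSign}$/$\text{UnblindSign}$ flow on dummy blinded messages to fabricate a $puzzle$ and $token$ consistent with the ideal $puzzle_{\sf ideal}$, $token_{\sf ideal}$ placeholders, maintaining a bijective lookup table between ideal strings and real strings. For \emph{offloading}, when a corrupt user submits a real token/puzzle, $\mathcal{S}$ extracts the corresponding ideal objects via its table and forwards $(\text{offRequest},\cdot)$ to $\mathcal{F}_{\sf{offload}}$; when the user is honest, $\mathcal{S}$ must simulate the BS's re-randomized, permuted puzzle list using $\text{PuzzleRerandomize}$ on its fabricated puzzles and the symmetric ciphertext $ct$ using $\text{SymEnc}$ on a dummy plaintext of the right length — indistinguishability here reduces to the indistinguishability and unlinkability of the puzzle scheme (Theorem~1) and to IND-CPA of the symmetric scheme. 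For \emph{claim}, $\mathcal{S}$ simply translates real claim messages to ideal $(\text{claimRequest},\cdot)$ calls. Throughout, $\mathcal{S}$ faithfully relays $\mathcal{A}$'s messages on simulated $\mathcal{F}_{\sf{smt}}$ channels.

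The indistinguishability argument proceeds by a hybrid sequence: $H_0$ is the real execution; successive hybrids replace (i) each honest party's blind-signature interactions with $\mathcal{F}_{\sf{sig}}$-mediated ones, incurring no loss by the UC-realizability of signatures; (ii) each honestly generated puzzle and each re-randomization with a freshly sampled puzzle, bounded by the DBDH/DDH-based indistinguishability and unlinkability from Theorem~1; (iii) each honest-user ciphertext with an encryption of zeros, bounded by IND-CPA; and (iv) the bookkeeping so that token/puzzle freshness and double-spend rejections match the flag logic in $T_{\sf t}$, $T_{\sf p}$ exactly — this last step also needs the unforgeability of the blind signature so that a corrupt user cannot present a real token that has no preimage in $\mathcal{S}$'s table (a forgery event of negligible probability). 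The final hybrid is $\mathrm{IDEAL}_{\mathcal{F},\mathcal{S},\mathcal{Z}}$. I would expect the main obstacle to be step (iv) combined with the corrupt-BS-plus-corrupt-ES case: ensuring that the simulator's extraction of ideal tokens from adversarially chosen real tokens is well-defined and that the ideal functionality's notion of ``puzzle validity within $p\_list'$'' and the real protocol's presence check in $Z'_{{\sf ID}_u}$ coincide, especially because the real BS re-randomizes puzzles adaptively per request while the ideal $\mathcal{F}_{\sf{offload}}$ tracks versions via $ver^*$ — reconciling these two accounting mechanisms, and showing no environment can distinguish a replay/stale-puzzle rejection in one world from the other, is the delicate part of the proof.
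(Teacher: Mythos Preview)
Your proposal is correct and lands on essentially the same argument as the paper: a hybrid/game-hopping proof that constructs a simulator maintaining tables between ideal placeholders ($token_{\sf ideal}$, $puzzle_{\sf ideal}$) and real strings, and reduces indistinguishability of successive hybrids to blind-signature unforgeability/blindness, the DBDH/DDH-based puzzle properties of Theorem~1, and semantic security of the symmetric scheme.

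The organization differs slightly. The paper's game sequence is stratified by \emph{simulator capability}: Game~0 is the real protocol, Game~1 replaces the channel with $\mathcal{F}_{\sf smt}$, Game~2 lets $\mathcal{S}$ read honest-party outputs and handle the adversarial cases (forged/double-spent tokens, replayed puzzles, curious BS, exaggerated claims) one by one, and Game~3 removes $\mathcal{S}$'s access to honest outputs so it must fabricate them itself. Your hybrids are instead stratified by \emph{primitive}: swap signatures, then puzzles, then ciphertexts, then reconcile bookkeeping. Both orderings are standard and arrive at the same endpoint; your decomposition makes the reductions to each assumption more explicit (and you name IND-CPA for the service ciphertext, which the paper leaves implicit), while the paper's decomposition more directly mirrors the threat-model case analysis in \S\ref{threat_model}. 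The obstacle you flag---aligning the real BS's per-request re-randomization with the ideal $ver^*$ accounting under a corrupt BS---is real but is handled in the paper (and would be in your proof) by having $\mathcal{S}$ record the mapping between each real re-randomized puzzle and its ideal counterpart at re-randomization time, so stale-puzzle rejection coincides in both worlds.
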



\begin{proof}

We design a series of games, where each game differs slightly from the previous one but remains indistinguishable from the view of the PPT environment $\mathcal{Z}$.

\noindent \textbf{Game 0:} This is the real world protocol SA$^2$FE that interacts directly with the environment $\mathcal{Z}$ and adversary $\mathcal{A}$.

\noindent \textbf{Game 1:} This game is identical to Game 0 except that the real-world communication channel is replaced by  $\mathcal{F}_{\textsf{smt}}$.

\begin{lemma} \label{game0_game1}
    For any $\mathcal{A}$ and $\mathcal{Z}$, there exists an $\mathcal{S}$ such that the view of $\mathcal{Z}$ in Game 1 is indistinguishable from its view in Game 0, i.e.,
    $
    \normalfont \textsf{Exec}_{\text{Game0},\mathcal{Z}} \approx \textsf{Exec}_{\text{Game1},\mathcal{Z}}.
    $
\end{lemma}

\begin{proof}
\grey{$\mathcal{S}$ can run $\mathcal{S}_{\textsf{smt}}$ for $\mathcal{F}_{\textsf{smt}}$ to achieve the indistinguishability between the real world and ideal world.}
\vspace{-2mm}
\end{proof}

\noindent \textbf{Game 2:} 
Let $\mathcal{S}$ have access to the output of both honest parties and the adversary $\mathcal{A}$. 
Then $\mathcal{S}$ tries to simulate the protocol with the help of $\mathcal{F}_{\textsf{SA$^2$FE}}$.
In the real world, $\mathcal{A}$ can corrupt the entities. Subsequently, all incoming and outgoing messages of the corrupted party go through $\mathcal{A}$.
In the ideal world, $\mathcal{S}$ has the ability to corrupt entities and inform $\mathcal{F}_{\textsf{SA$^2$FE}}$ accordingly. In the subsequent process, $\mathcal{F}_{\textsf{SA$^2$FE}}$ will discard all messages from the corrupted party and treat $\mathcal{S}$ as the corrupted party.

\begin{lemma} \label{game1_game2}
    $
    \normalfont \textsf{Exec}_{\text{Game1},\mathcal{Z}} \approx \textsf{Exec}_{\text{Game2},\mathcal{Z}}
    $.
\end{lemma}

\begin{proof}

    Simulator $\mathcal{S}$ obtains setup information from SP and FA. Upon receiving registration requests from SP, ES, and user, $\mathcal{S}$ has sufficient information to generate messages acceptable to $\mathcal{F}_{\sf{register}}$. This allows $\mathcal{F}_{\sf{register}}$ to update the internal tables $T_{\sf s}$, $T_{\sf p}$, $T_{\sf t}$ accordingly.
    Specifically, at ES registers to BS stage, $\mathcal{S}$ records the map between the $puzzle$ from real world and the $puzzle_{\sf ideal}$ in the ideal world.
    At token registration stage, $\mathcal{S}$ maintains a local map $\mathcal{R}$ that associates $token$ with $token_{\sf ideal}$.

    For offloading and payment claim, considering the threat model in \S\ref{threat_model}, the following cases need to be tackled.

\subsubsection{Wrong token from user side}
 Consider a user who tries to double spend a token or use a forged token to get the service from multiple ESs.
    \grey{If the user generates a fake token that deceives the BS, the unforgeability of the blind signature is violated.}
    In the real world, 
    the BS validates user's $token$ and rejects the request for invalid or double-spent tokens.
    \grey{In the ideal world, $\mathcal{S}$ checks and updates $\mathcal{R}$ to reject tokens where $f_{bs}$ is not $\textsf{fresh}$.}
    \grey{$\mathcal{S}$ creates a nonexistent $token_{\sf ideal}$ in $T_{\sf t}$ for an invalid token, causing $\mathcal{F}_{\sf{offload}}$ to return ``invalidToken'' back.}

    \subsubsection{User exploits a specific ES}
    A user linking two re-randomized puzzles will lead to a violation of the DBDH/DDH assumptions.
    \grey{The user can only exploit a specific ES by reusing the same $z_u$ as before.}
    If BS detects that $z_u$ has been used before, it rejects the offloading request.
    In the ideal world, $\mathcal{S}$ retrieves $puzzle_{\sf ideal}'$ of $z_u$ and sends it to $\mathcal{F}_{\sf{offload}}$. 
    If $\mathcal{F}_{\sf{offload}}$ finds that $puzzle_{\sf ideal}'$ for $uid$ is not with $ver^*$ or $puzzle_{\sf ideal}'$ in $T_{\sf p}$ has $f_p \!=\! \textsf{used}$, $\mathcal{F}_{\sf{offload}}$ returns an ``invalidPuzzle'' message to $\mathcal{S}$. $\mathcal{S}$ then responds to the user with ``fail''.

    \subsubsection{BS/ES exaggerates service for rewards}
    \grey{If the BS/ES uses an invalid token or double spends it to claim extra rewards, the FA will detect it and reject the request.}
    In the ideal world, if $\mathcal{S}$ generates a $token_{\sf ideal}$ not existing in $T_{\sf t}$ for an invalid $token$ or the $f_{bs}$/$f_{es}$ of $token_{\sf ideal}$  is ${\sf claimed}$, $\mathcal{F}_{\sf{claim}}$ will return an ``invalidToken'' message to $\mathcal{S}$.

\subsubsection{Curious BS on user service request}
\grey{For puzzle registration from an ES, upon receiving $(puzzle, {\sf ID}_{BS})$, $\mathcal{S}$ forwards it to the BS.}
\grey{The rerandomized puzzle generated by the BS requires no translation by $\mathcal{S}$.}
\grey{For the offloading request, selected puzzle and $(z_u, ct)$ from the user, $\mathcal{S}$ directly forwards the user's output to the BS.}
\grey{As $\mathcal{S}$ only performs forwarding, the real and ideal worlds are indistinguishable.}

\subsubsection{Curious FA, SP, BS, ES on user identity}
The blind signature proof process resembles the ticket request process in~\cite{turan2020tmps}, which has been proved to be UC-secure.
\vspace{-2mm}
\end{proof}

\noindent \textbf{Game 3:} 
In this game, the simulator $\mathcal{S}$ cannot directly communicate with honest parties. Instead, $\mathcal{S}$ needs to generate the outputs of the honest parties to the adversary $\mathcal{A}$.

\begin{lemma} \label{game2_game3}
    $
    \normalfont \textsf{Exec}_{\text{Game2},\mathcal{Z}} \approx \textsf{Exec}_{\text{Game3},\mathcal{Z}}
    $.
\end{lemma}

\begin{proof}
    $\mathcal{S}$ generates the system parameters and keys on behalf of the honest parties.
    In the ideal world, upon receiving a registration request from the corrupted/honest ES or user, $\mathcal{S}$ constructs a corresponding message and sends it to $\mathcal{F}_{\textsf{register}}$. 
    $\mathcal{S}$ generates real-world $puzzle$, $token$, and other information using these parameters and keys, 
    and then sends them to $\mathcal{A}$.

\begin{table*}[ht]
    \centering
    \caption{Communication Cost and Execution Time of SA$^2$FE}
    \vspace{0.5em}
    \label{tab:merged_table}
    \begin{threeparttable}
    \begin{tabular}{c | l r | l r }
        \hline
        & \multicolumn{2}{c|}{\textbf{Communication Cost\tnote{*}}} & \multicolumn{2}{c}{\textbf{Execution Time}} \\
            \cline{2-5}
            \textbf{Description} & \textbf{Message} & \textbf{Size (bytes)} & \textbf{Step} & \textbf{Time (ms)} \\
            \hline
       \multirow{2}{*}{Setup} & \multirow{2}{*}{-} & \multirow{2}{*}{-} & SP setup & 371.30 \\
            &  &  & FA setup & 392.42 \\
        \hline
        SP registration & SP to FA registration request & 2012 & SP registration & 59.74 \\
        \hline
        \multirow{3}{*}{ES registration} & \multirow{1}{*}{ES to SP registration response} & \multirow{1}{*}{748} & ES registered to SP & 2.62 \\
            
            & \multirow{2}{*}{ES to BS registration request} & \multirow{2}{*}{275} & ES puzzle generation time & 3.12 \\
            & & & {ES registered to BS} & {1.57} \\
            
         \hline 

        \multirow{5}{*}{User registration}& \multirow{3}{*}{User to FA registration request} & \multirow{3}{*}{1317} & User blinded token & 1.71 \\
         
         & \multirow{4}{*}{User to FA registration response} & \multirow{4}{*}{1626} & FA signed token & 18.65 \\
         
         &  &  & User got token & 58.89 \\
         &  &  & User unblinded token & 0.84 \\
         &  &  & User verified token & 1.73 \\
        \hline
        \multirow{4}{*}{Offloading}
        & User initial offloading request & 1310 & BS rerandomized puzzles & 55.92 \\
         & User received puzzle list & 8163 & User got response of initial offloading request & 62.19 \\
         & \multirow{2}{*}{User generated service request} & \multirow{2}{*}{309} & User selected puzzle & 237.67  \\
         & & & User got service response & 6.88 \\
        \hline

        \multirow{2}{*}{Payment claim}
        & BS token claim request & 1310 & BS claimed token & 2.24 \\
         & ES token claim request & 1312 & ES claimed token & 2.67 \\
        \hline
    \end{tabular}
    \begin{tablenotes}
        \item[*] Sizes of trivial text messages such as ``success'' or ``fail'' are omitted. The message size excludes the service content data ciphertext.
    \end{tablenotes}
    \end{threeparttable}
\end{table*}

\begin{table}[tt]
    \centering
    \caption{Evaluation Platforms}
    \vspace{0.5em}
    \label{tab:platform}
    \begin{tabular}{|>{\centering\arraybackslash}m{1.5cm}|>{\centering\arraybackslash}m{2.6cm}|>{\centering\arraybackslash}m{1.9cm}|>{\centering\arraybackslash}m{1.1cm}|}
        \hline
        \textbf{Platform} & \textbf{CPU} & \textbf{OS} & \textbf{Memory} \\
        \hline
        HWI-AL00 Phone & Hisilicon Kirin 960 2.36GHz, 8 cores & Android 8.0.0 (ARM) & 6GB \\
        \hline
        Raspberry Pi 4 Model B & Broadcom BCM2835 700MHz, 4 cores & Ubuntu 22.10 (ARM) & 3.7GB \\
        \hline
        Laptop & Intel Core i7-7700HQ 2.80GHz, 8 cores & 64-bit Windows (x86) & 24GB \\
        \hline
        Desktop & AMD Ryzen 3945WX 4.0GHz, 12 cores & 64-bit Ubuntu (x86) & 256GB \\
        \hline
    \end{tabular}
\end{table}

\grey{For offloading requests from corrupted users, $\mathcal{S}$ rerandomizes the $puzzle$ and records its mapping to $puzzle_{\sf new}$.}
\grey{This enables $\mathcal{S}$ to construct real-world puzzle lists indistinguishable within and across batches by using universal re-encryption or bilinear mapping to generate real-world puzzles matching ideal-world random strings, which it then sends to $\mathcal{A}$.}

   For the payment claim protocol, when a corrupted BS/ES claims the reward, $\mathcal{S}$ validates the $token$ and generates corresponding $token_{\sf ideal}$, and sends it to $\mathcal{F}_{\sf{claim}}$. After receiving feedback, $\mathcal{S}$ generates corresponding messages for $\mathcal{A}$.
    \vspace{-2mm}
\end{proof}

Combining Lemmas~\ref{game0_game1}--\ref{game2_game3} proves Theorem~\ref{thm:UC}.
\end{proof}




\vspace{-4mm}
\section{Performance Evaluation}
\label{sec:eval}

\subsection{Implementation and Experiment Settings}
\noindent
We used the gRPC framework (v.1.51.3)~\cite{grpc} to implement the communication between different parties. All protocols were implemented in Python.
We used RSA blind signature for blindly signing the tokens, and used AES in CBC mode for symmetric key encryption, both implemented in the Crypto library (v.3.17)~\cite{crypto}.
For the puzzle based on bilinear map,
we used the pairing library from Charm Crypto (v.0.50)~\cite{charm_crypto} and used the SS512 curve on x86 platform, and the PBC library (v0.5.14) (in C language) on ARM CPU platform.
The puzzle based on universal re-encryption was implemented based on the ElGamal encryption scheme~\cite{elgamal_implement}.
SHA-256 was used as the hash function in the protocol.

\icdcs{We evaluated the performance of SA$^2$FE on four platforms as shown in Table~\ref{tab:platform}}.
By default, the user client was run on the Raspberry Pi, while the other parties were run on the desktop.

\begin{figure}[tt]
    \centering
    \includegraphics[width=0.31\textwidth]{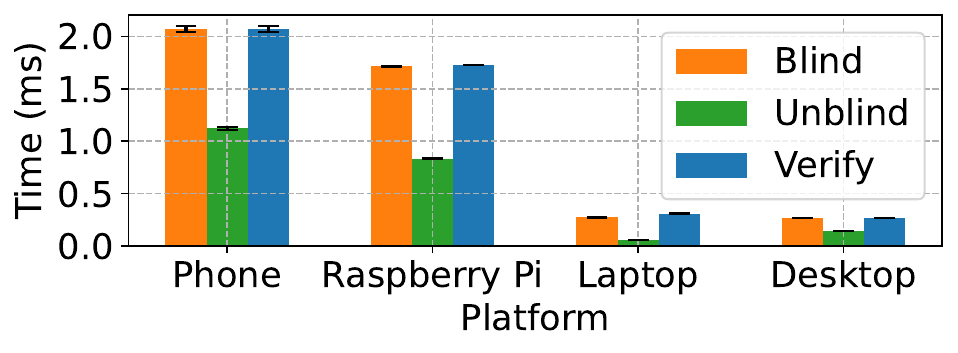}
    \caption{Token registration (blind signature) delay.}
    \label{fig:blind_signature}
\end{figure}

\vspace{-3mm}
\subsection{Evaluation Results}
\label{Ev_2}

\noindent We evaluated the performance of SA$^2$FE by analyzing both the communication overhead and execution time.
\revct{Table~\ref{tab:merged_table} presents the evaluation results, showing the added overhead of our solution compared to ordinary offloading, where service requests are directly offloaded from the user to a edge server without any security guarantees.}
We conducted a statistical analysis on 1,000 runs of SA$^2$FE on different platforms to calculate the average time taken for each step. 
The default number of puzzles was set to 30.

As shown in the communication cost part of Table~\ref{tab:merged_table}, all message sizes were below 9KB. 
The practical execution time part of Table~\ref{tab:merged_table} provides an overview of the delay associated with each step.
SP setup and FA setup phases had the longest delays, which should only be executed once when the system initializes.
The puzzle generation overhead for ES was relatively small, with merely a 3.12ms overhead.
At the user end, the most significant delay during the offloading phase occurred when selecting a puzzle. It took approximately 237.67ms to match and randomly choose one from
30 available puzzles. 
This selection process occurs only once at offloading initiation,
while the duration of a single service session can last for a considerable time, such as minutes to hours~\cite{farhadi2021service,aral2018dependency}.

\begin{figure}[tt]
    \centering
    \includegraphics[width=0.31\textwidth]{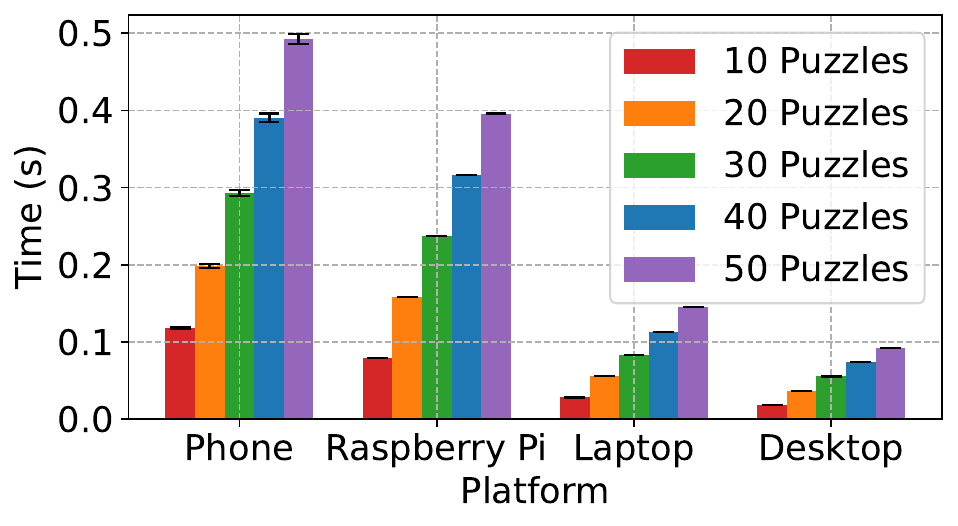}
    \caption{Bilinear map puzzle matching delay.}
    \label{fig:bilinear}
    \vspace{1mm}
\end{figure}

\begin{figure}[tt]
    \centering
    \includegraphics[width=0.31\textwidth]{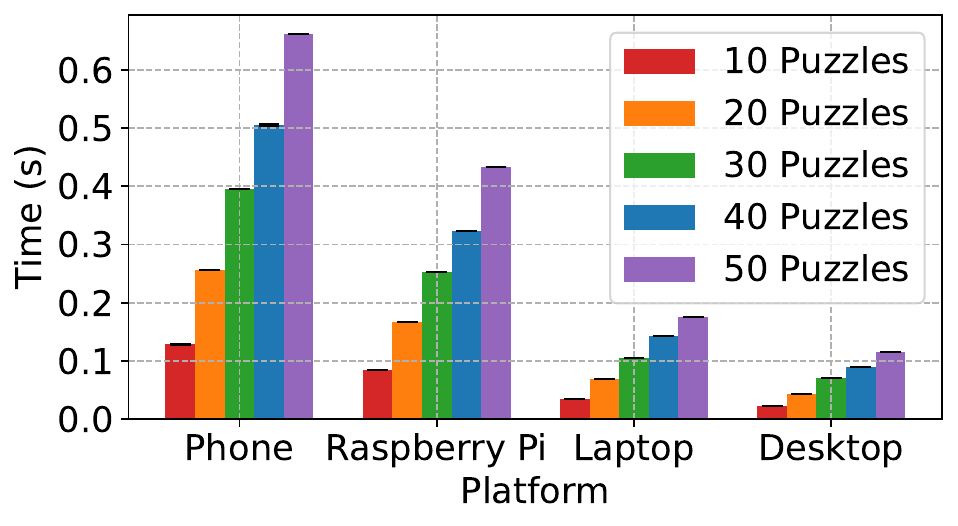}
    \caption{Universal re-encryption puzzle matching delay.}
    \label{fig:URE}
    \vspace{1mm}
\end{figure}

To evaluate the overhead on the user side of SA$^2$FE, we collected the computation delays on user registration and puzzle matching processes on four different platforms. The results are presented in Figs.~\ref{fig:blind_signature}$-$\ref{fig:URE}, with error bars representing the 95\% confidence interval obtained from running each experiment 1000 times.
\grey{Fig.~\ref{fig:blind_signature} shows the delay experienced by the user during the registration.}
It can be observed that the user-side computation overhead during the user registration process was small. Even on the lowest-performing platform, the average delay for each step was at most 2.07ms.
Figs.~\ref{fig:bilinear} and~\ref{fig:URE} focus on two different puzzle implementation approaches: bilinear map and universal re-encryption. 
These figures illustrate the match delay for various numbers of puzzles on different platforms. 
It can be observed that with better computation capability, the match delay decreased. Additionally, as the number of puzzles increased, the delay in selecting all the suitable puzzles from the puzzle list and randomly choosing one puzzle as the final puzzle also increased. 
\revat{Overall, while DBDH is a stronger assumption compared to DDH, the bilinear map puzzle exhibited slightly lower computation overhead compared to the universal re-encryption puzzle with the current implementation.}



\vspace{-2mm}
\section{Conclusion}
\label{sec:conclusions}
\noindent
In this paper, we proposed SA$^2$FE, an anonymous, auditable and fair service offloading framework for
democratized edge computing ecosystem.
A novel rerandomizable puzzle primitive was introduced to enhance the design of the service offloading by preserving service type privacy and enabling fair and randomized edge server selection.
Additionally, a token-based scheme was proposed to enable access control, maintain user anonymity, protect service type confidentiality, and enable accountable token verification and claiming.
We proved the security of SA$^2$FE based on the UC framework.
The experimental results demonstrated the efficiency of SA$^2$FE in terms of communication and computation overhead.


\vspace{-2mm}
\bibliographystyle{IEEEtran}


\begin{thebibliography}{10}
  \providecommand{\url}[1]{#1}
  \csname url@samestyle\endcsname
  \providecommand{\newblock}{\relax}
  \providecommand{\bibinfo}[2]{#2}
  \providecommand{\BIBentrySTDinterwordspacing}{\spaceskip=0pt\relax}
  \providecommand{\BIBentryALTinterwordstretchfactor}{4}
  \providecommand{\BIBentryALTinterwordspacing}{\spaceskip=\fontdimen2\font plus
  \BIBentryALTinterwordstretchfactor\fontdimen3\font minus \fontdimen4\font\relax}
  \providecommand{\BIBforeignlanguage}[2]{{%
  \expandafter\ifx\csname l@#1\endcsname\relax
  \typeout{** WARNING: IEEEtran.bst: No hyphenation pattern has been}%
  \typeout{** loaded for the language `#1'. Using the pattern for}%
  \typeout{** the default language instead.}%
  \else
  \language=\csname l@#1\endcsname
  \fi
  #2}}
  \providecommand{\BIBdecl}{\relax}
  \BIBdecl
  
  \bibitem{duan2021metaverse}
  H.~Duan, J.~Li, S.~Fan, Z.~Lin, X.~Wu, and W.~Cai, ``Metaverse for social good: A university campus prototype,'' in \emph{ACM MM}, 2021, pp. 153--161.
  
  \bibitem{meng2023enabling}
  Z.~Meng, T.~Wang, Y.~Shen, B.~Wang, M.~Xu, R.~Han, H.~Liu, V.~Arun, H.~Hu, and X.~Wei, ``Enabling high quality real-time communications with adaptive frame-rate,'' in \emph{USENIX NSDI}, 2023, pp. 1429--1450.
  
  \bibitem{bhardwaj2022ekya}
  R.~Bhardwaj, Z.~Xia, G.~Ananthanarayanan, J.~Jiang, Y.~Shu, N.~Karianakis, K.~Hsieh, P.~Bahl, and I.~Stoica, ``Ekya: Continuous learning of video analytics models on edge compute servers,'' in \emph{USENIX NSDI}, 2022, pp. 119--135.
  
  \bibitem{edgeComputingMarket}
  \BIBentryALTinterwordspacing
  ``{NVIDIA unveils GPU-accelerated AI-on-5G system for edge AI, 5G and omniverse digital twins},'' {accessed 2024-01-19}. [Online]. Available: \url{{https://blogs.nvidia.com/blog/2023/02/27/mwc-ai-on-5g-system/}}
  \BIBentrySTDinterwordspacing
  
  \bibitem{edgeComputingCompany}
  \BIBentryALTinterwordspacing
  ``100 edge computing companies to watch in 2023,'' {accessed 2024-01-19}. [Online]. Available: \url{https://stlpartners.com/articles/edge-computing/edge-computing-companies-2023/}
  \BIBentrySTDinterwordspacing
  
  \bibitem{ning2020distributed}
  Z.~Ning, P.~Dong, X.~Wang, S.~Wang, X.~Hu, S.~Guo, T.~Qiu, B.~Hu, and R.~Y. Kwok, ``Distributed and dynamic service placement in pervasive edge computing networks,'' \emph{IEEE Transactions on Parallel and Distributed Systems}, vol.~32, no.~6, pp. 1277--1292, 2020.
  
  \bibitem{tourani2020democratizing}
  R.~Tourani, S.~Srikanteswara, S.~Misra, R.~Chow, L.~Yang, X.~Liu, and Y.~Zhang, ``Democratizing the edge: A pervasive edge computing framework,'' \emph{arXiv preprint arXiv:2007.00641}, 2020.
  
  \bibitem{dougherty2021apecs}
  S.~Dougherty, R.~Tourani, G.~Panwar, R.~Vishwanathan, S.~Misra, and S.~Srikanteswara, ``Apecs: A distributed access control framework for pervasive edge computing services,'' in \emph{ACM CCS}, 2021, pp. 1405--1420.
  
  \bibitem{kaiser2014efficient}
  D.~Kaiser and M.~Waldvogel, ``Efficient privacy preserving multicast dns service discovery,'' in \emph{2014 IEEE Intl Conf on High Performance Computing and Communications, 2014 IEEE 6th Intl Symp on Cyberspace Safety and Security, 2014 IEEE 11th Intl Conf on Embedded Software and Syst (HPCC, CSS, ICESS)}.\hskip 1em plus 0.5em minus 0.4em\relax IEEE, 2014, pp. 1229--1236.
  
  \bibitem{wu2016privacy}
  D.~J. Wu, A.~Taly, A.~Shankar, and D.~Boneh, ``Privacy, discovery, and authentication for the internet of things,'' in \emph{Computer Security--ESORICS 2016: 21st European Symposium on Research in Computer Security, Heraklion, Greece, September 26-30, 2016, Proceedings, Part II 21}.\hskip 1em plus 0.5em minus 0.4em\relax Springer, 2016, pp. 301--319.
  
  \bibitem{welke2016differentiating}
  P.~Welke, I.~Andone, K.~Blaszkiewicz, and A.~Markowetz, ``Differentiating smartphone users by app usage,'' in \emph{ACM UbiComp}, 2016, pp. 519--523.
  
  \bibitem{weiss2018survey}
  M.~Weiss, M.~Luck, R.~Girgis, C.~Pal, and J.~P. Cohen, ``A survey of mobile computing for the visually impaired,'' \emph{arXiv preprint arXiv:1811.10120}, 2018.
  
  \bibitem{zhu2004prudentexposure}
  F.~Zhu, M.~Mutka, and L.~Ni, ``Prudentexposure: A private and user-centric service discovery protocol,'' in \emph{Second IEEE Annual Conference on Pervasive Computing and Communications, 2004. Proceedings of the}.\hskip 1em plus 0.5em minus 0.4em\relax IEEE, 2004, pp. 329--338.
  
  \bibitem{zhou2022aadec}
  X.~Zhou, D.~He, J.~Ning, M.~Luo, and X.~Huang, ``Aadec: Anonymous and auditable distributed access control for edge computing services,'' \emph{IEEE Transactions on Information Forensics and Security}, vol.~18, pp. 290--303, 2022.
  
  \bibitem{xue2018combining}
  K.~Xue, W.~Chen, W.~Li, J.~Hong, and P.~Hong, ``Combining data owner-side and cloud-side access control for encrypted cloud storage,'' \emph{IEEE Transactions on Information Forensics and Security}, vol.~13, no.~8, pp. 2062--2074, 2018.
  
  \bibitem{zheng2022towards}
  T.~Zheng, Y.~Luo, T.~Zhou, and Z.~Cai, ``Towards differential access control and privacy-preserving for secure media data sharing in the cloud,'' \emph{Computers \& Security}, vol. 113, p. 102553, 2022.
  
  \bibitem{hu2020ghostor}
  Y.~Hu, S.~Kumar, and R.~A. Popa, ``Ghostor: Toward a secure data-sharing system from decentralized trust.'' in \emph{USENIX NSDI}, 2020, pp. 851--877.
  
  \bibitem{li2022dsos}
  H.~Li, J.~Yu, J.~Fan, and Y.~Pi, ``Dsos: A distributed secure outsourcing system for edge computing service in iot,'' \emph{IEEE Transactions on Systems, Man, and Cybernetics: Systems}, vol.~53, no.~1, pp. 238--250, 2022.
  
  \bibitem{mao2020privacy}
  Y.~Mao, W.~Hong, H.~Wang, Q.~Li, and S.~Zhong, ``Privacy-preserving computation offloading for parallel deep neural networks training,'' \emph{IEEE Transactions on Parallel and Distributed Systems}, vol.~32, no.~7, pp. 1777--1788, 2020.
  
  \bibitem{chen2013new}
  X.~Chen, J.~Li, J.~Ma, Q.~Tang, and W.~Lou, ``New algorithms for secure outsourcing of modular exponentiations,'' \emph{IEEE Transactions on Parallel and Distributed Systems}, vol.~25, no.~9, pp. 2386--2396, 2013.
  
  \bibitem{chen2019efficient}
  X.~Chen, Y.~Cai, Q.~Shi, M.~Zhao, B.~Champagne, and L.~Hanzo, ``Efficient resource allocation for relay-assisted computation offloading in mobile-edge computing,'' \emph{IEEE Internet of Things Journal}, vol.~7, no.~3, pp. 2452--2468, 2019.
  
  \bibitem{gao2021task}
  M.~Gao, R.~Shen, L.~Shi, W.~Qi, J.~Li, and Y.~Li, ``Task partitioning and offloading in dnn-task enabled mobile edge computing networks,'' \emph{IEEE Transactions on Mobile Computing}, 2021.
  
  \bibitem{gao2019dynamic}
  Y.~Gao, W.~Tang, M.~Wu, P.~Yang, and L.~Dan, ``Dynamic social-aware computation offloading for low-latency communications in iot,'' \emph{IEEE Internet of Things Journal}, vol.~6, no.~5, pp. 7864--7877, 2019.
  
  \bibitem{park2018cooperative}
  G.~S. Park and H.~Song, ``Cooperative base station caching and x2 link traffic offloading system for video streaming over sdn-enabled 5g networks,'' \emph{IEEE Transactions on Mobile Computing}, vol.~18, no.~9, pp. 2005--2019, 2018.
  
  \bibitem{zhang2025optimizing}
  L.~Zhang, M.~Wang, L.~Wang, Z.~Chen, and H.~Zhang, ``Optimizing vehicle edge computing task offloading at intersections: a fuzzy decision-making approach,'' \emph{The Journal of Supercomputing}, vol.~81, no.~1, p.~29, 2025.
  
  \bibitem{jia2024deep}
  M.~Jia, L.~Zhang, J.~Wu, Q.~Guo, G.~Zhang, and X.~Gu, ``Deep multi-agent reinforcement learning for task offloading and resource allocation in satellite edge computing,'' \emph{IEEE Internet of Things Journal}, 2024.
  
  \bibitem{chen2024multi}
  Y.~Chen, J.~Zhao, Y.~Wu, J.~Huang, and X.~Shen, ``Multi-user task offloading in uav-assisted leo satellite edge computing: A game-theoretic approach,'' \emph{IEEE Transactions on Mobile Computing}, 2024.
  
  \bibitem{wang2024ddqn}
  S.~Wang, Z.~Lu, H.~Gui, X.~He, S.~Zhao, Z.~Fan, Y.~Zhang, and S.~Pang, ``Ddqn-based online computation offloading and application caching for dynamic edge computing service management,'' \emph{Ad Hoc Networks}, p. 103681, 2024.
  
  \bibitem{sun2024joint}
  H.~Sun, Y.~Zhou, H.~Zhang, L.~Ale, H.~Dai, and N.~Zhang, ``Joint optimization of caching, computing and trajectory planning in aerial mobile edge computing networks: A maddpg approach,'' \emph{IEEE Internet of Things Journal}, 2024.
  
  \bibitem{li2025edge}
  C.~Li, X.~Deng, R.~Huang, L.~Zheng, and C.~Yang, ``Edge computing offload and resource allocation strategy with pairing theory,'' in \emph{International Conference on Mobile Multimedia Communications}.\hskip 1em plus 0.5em minus 0.4em\relax Springer, 2025, pp. 283--295.
  
  \bibitem{aloufi2021edgy}
  R.~Aloufi, H.~Haddadi, and D.~Boyle, ``Edgy: On-device paralinguistic privacy protection,'' in \emph{ACM MobiCom Workshop}, 2021, pp. 3--5.
  
  \bibitem{do2011profiling}
  A.~V. Do, J.~Chen, C.~Wang, Y.~C. Lee, A.~Y. Zomaya, and B.~B. Zhou, ``Profiling applications for virtual machine placement in clouds,'' in \emph{IEEE CLOUD}, 2011, pp. 660--667.
  
  \bibitem{fuchsbauer2020blind}
  G.~Fuchsbauer, A.~Plouviez, and Y.~Seurin, ``Blind schnorr signatures and signed elgamal encryption in the algebraic group model,'' in \emph{IACR EUROCRYPT}, 2020, pp. 63--95.
  
  \bibitem{turan2020tmps}
  M.~S. Turan, ``Tmps: Ticket-mediated password strengthening,'' in \emph{CT-RSA}, vol. 12006, 2020, p. 225.
  
  \bibitem{elgamal1985public}
  T.~ElGamal, ``A public key cryptosystem and a signature scheme based on discrete logarithms,'' \emph{IEEE Transactions on Information Theory}, vol.~31, no.~4, pp. 469--472, 1985.
  
  \bibitem{golle2004universal}
  P.~Golle, M.~Jakobsson, A.~Juels, and P.~Syverson, ``Universal re-encryption for mixnets,'' in \emph{CT-RSA}, 2004, pp. 163--178.
  
  \bibitem{green2007identity}
  M.~Green and G.~Ateniese, ``Identity-based proxy re-encryption,'' in \emph{ACNS}, 2007, pp. 288--306.
  
  \bibitem{tsiounis1998security}
  Y.~Tsiounis and M.~Yung, ``On the security of elgamal based encryption,'' in \emph{IACR PKC}, 1998, pp. 117--134.
  
  \bibitem{wang2024veriedge}
  X.~Wang, R.~Yu, D.~Yang, H.~Gu, and Z.~Li, ``Veriedge: Verifying and enforcing service level agreements for pervasive edge computing,'' in \emph{IEEE INFOCOM}, 2024, pp. 2149--2158.
  
  \bibitem{kate2023flexirand}
  A.~Kate, E.~V. Mangipudi, S.~Maradana, and P.~Mukherjee, ``Flexirand: Output private (distributed) vrfs and application to blockchains,'' in \emph{ACM CCS}, 2023, pp. 1776--1790.
  
  \bibitem{hao2023robust}
  X.~Hao, C.~Lin, W.~Dong, X.~Huang, and H.~Xiong, ``Robust and secure federated learning against hybrid attacks: A generic architecture,'' \emph{IEEE Transactions on Information Forensics and Security}, 2023.
  
  \bibitem{ben2005universal}
  M.~Ben-Or, M.~Horodecki, D.~W. Leung, D.~Mayers, and J.~Oppenheim, ``The universal composable security of quantum key distribution,'' in \emph{IACR TCC}, 2005, pp. 386--406.
  
  \bibitem{canetti2001universally}
  R.~Canetti, ``Universally composable security: A new paradigm for cryptographic protocols,'' in \emph{IEEE FOCS}, 2001, pp. 136--145.
  
  \bibitem{grpc}
  \BIBentryALTinterwordspacing
  ``{gRPC: A high performance, open-source universal RPC framework},'' {accessed 2024-01-19}. [Online]. Available: \url{{https://grpc.io/}}
  \BIBentrySTDinterwordspacing
  
  \bibitem{crypto}
  \BIBentryALTinterwordspacing
  ``{PyCrypto: The Python cryptography toolkit},'' {accessed 2024-01-19}. [Online]. Available: \url{https://github.com/pycrypto/pycrypto}
  \BIBentrySTDinterwordspacing
  
  \bibitem{charm_crypto}
  \BIBentryALTinterwordspacing
  ``{Charm: A framework for rapidly prototyping cryptosystems},'' {accessed 2024-01-19}. [Online]. Available: \url{{https://github.com/JHUISI/charm}}
  \BIBentrySTDinterwordspacing
  
  \bibitem{elgamal_implement}
  \BIBentryALTinterwordspacing
  ``Python implementation of the elgamal crypto system,'' {accessed 2024-01-19}. [Online]. Available: \url{https://github.com/RyanRiddle/elgamal}
  \BIBentrySTDinterwordspacing
  
  \bibitem{farhadi2021service}
  V.~Farhadi, F.~Mehmeti, T.~He, T.~F. La~Porta, H.~Khamfroush, S.~Wang, K.~S. Chan, and K.~Poularakis, ``Service placement and request scheduling for data-intensive applications in edge clouds,'' \emph{IEEE/ACM Transactions on Networking}, vol.~29, no.~2, pp. 779--792, 2021.
  
  \bibitem{aral2018dependency}
  A.~Aral and I.~Brandic, ``Dependency mining for service resilience at the edge,'' in \emph{IEEE/ACM SEC}, 2018, pp. 228--242.
  
  \end{thebibliography}




\vspace{-11mm}
\begin{IEEEbiography}[{\includegraphics[width=1in,height=1.25in,clip,keepaspectratio]{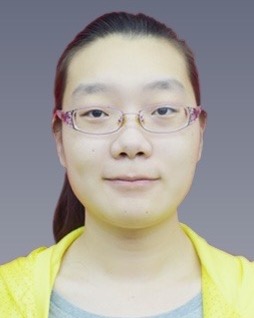}}]{Xiaojian Wang}(Student Member 2021) received her B.E. degree from Taiyuan University of Technology, China, in 2017 and received her M.S. degree in Computer Science from University of West Florida, FL, USA and Taiyuan University of Technology, China, in 2020. She is now a Ph.D. student in the department of Computer Science, College of Engineering at North Carolina State University. Her research interests include satellite network, security, blockchain, edge computing and so on. 
\end{IEEEbiography}

\vspace{-12mm}
\begin{IEEEbiography}[{\includegraphics[width=1in,height=1.25in,clip,keepaspectratio]{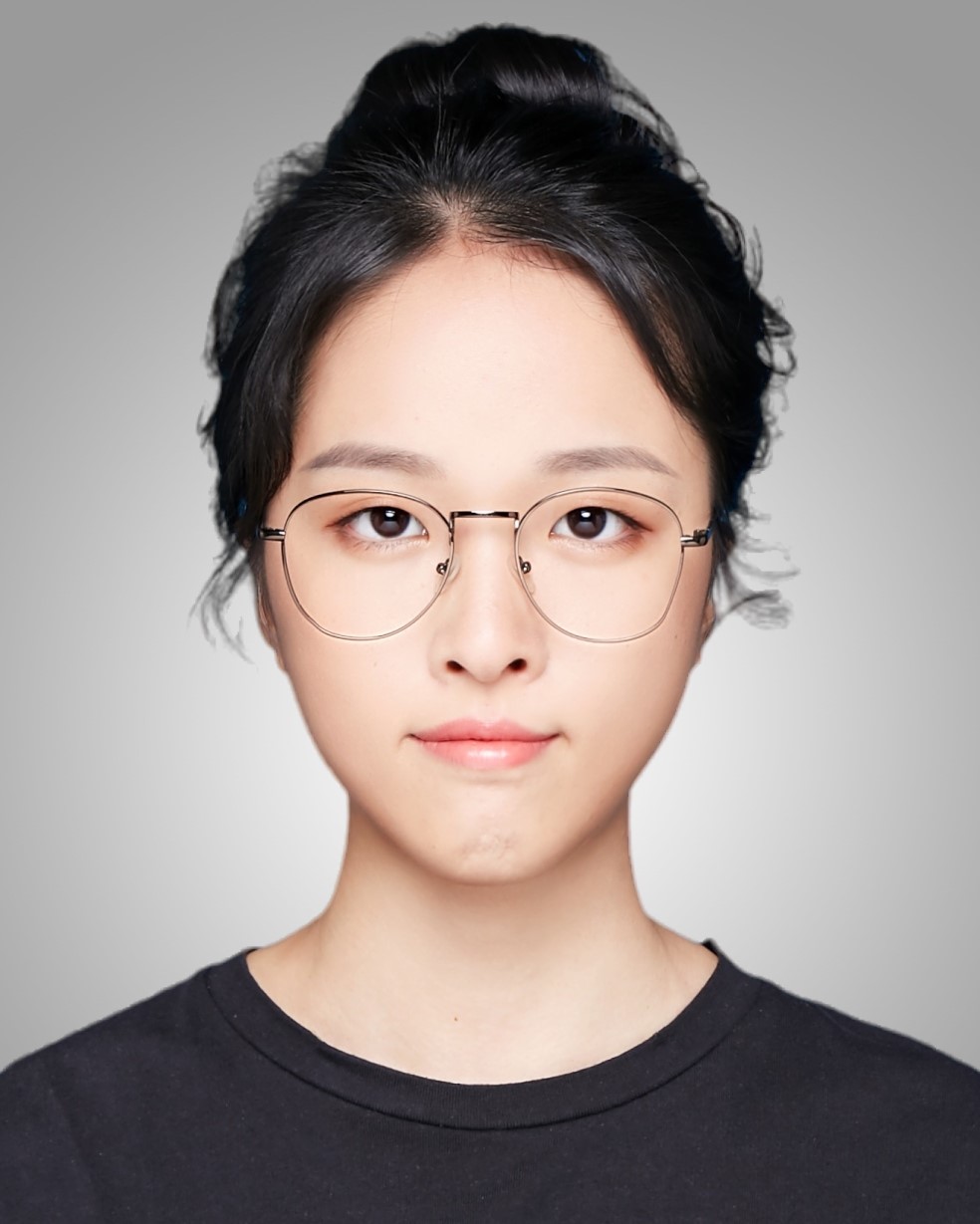}}]{Huayue Gu}(Student Member 2021) received her B.E. degree (2019) in Computer Science from Nanjing University of Posts and Telecommunications, Jiangsu, China, and M.S. degree (2021) in Computer Science from University of California, Riverside, CA, USA. Currently, she is a Ph.D. student in the Computer Science department at North Carolina State University. Her research interests are quantum networking, quantum communication, data analytics, etc.
\end{IEEEbiography}
\vspace{-12mm}
\begin{IEEEbiography}[{\includegraphics[width=1in,height=1.25in,clip,keepaspectratio]{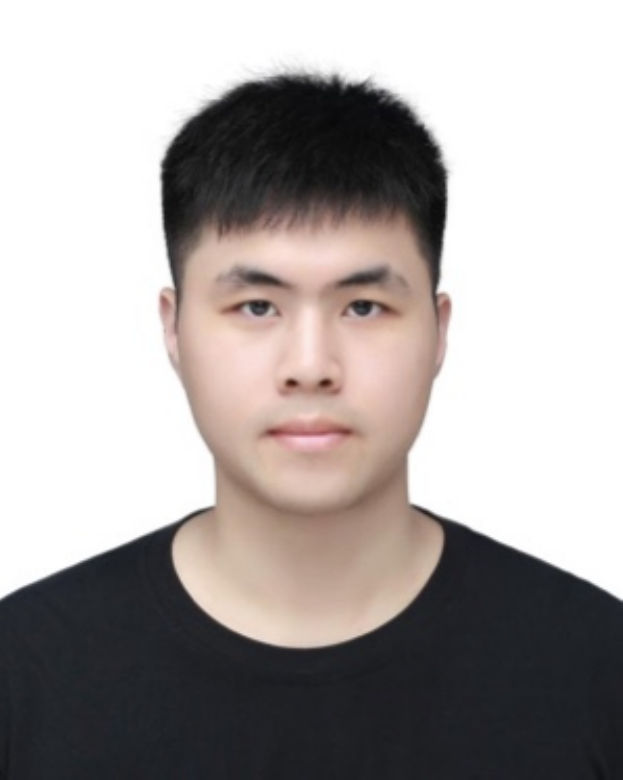}}]{Zhouyu Li}
(Student Member 2021) received his B.S. degree from Central South University, Changsha, China, in 2019 and his M.S. degree from Georgia Institute of Technology, Atlanta, U.S., in 2020. Currently, he is a Ph.D. student of Computer Science at North Carolina State University. His research interests include privacy, cloud/edge computing, network routing, etc.
\end{IEEEbiography}
\begin{IEEEbiography}[{\includegraphics[width=1in,height=1.25in,clip,keepaspectratio]{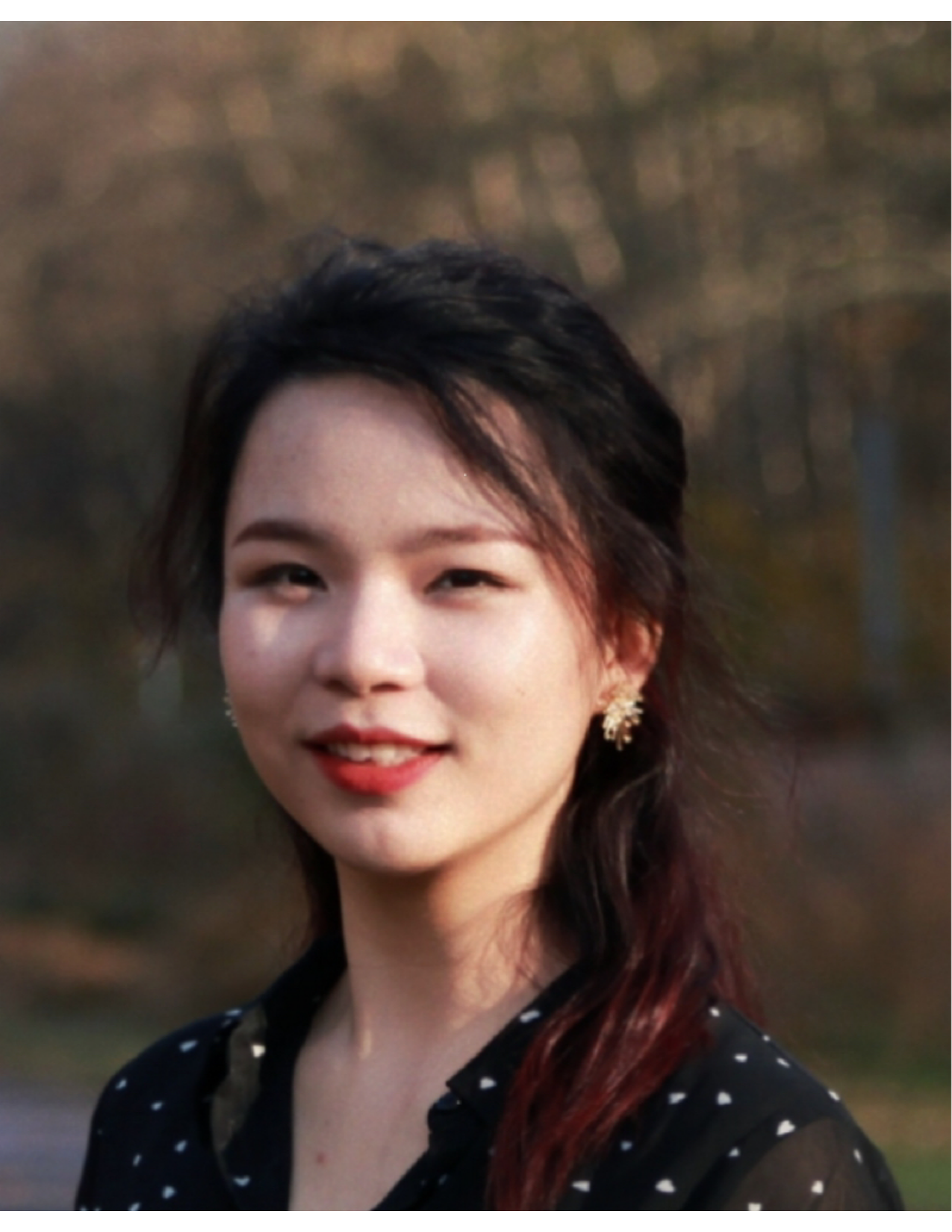}}]{Fangtong Zhou}(Student Member 2021) received her B.E. degree (2018) in Electrical Engineering and Automation from Harbin Institute of Technology, Harbin, China and M.S. degree (2020) in Electrical Engineering from Texas A\&M University, College Station, Texas, USA. Currently she is a Ph.D candidate in the School of Computer Science at North Carolina State University. Her research interests include machine learning in computer networking, like federated learning, reinforcement learning for resource provisioning, etc. 
\end{IEEEbiography}

\vspace{-12mm}
\begin{IEEEbiography}
[{\includegraphics[width=1in,height=1.25in,clip,keepaspectratio]{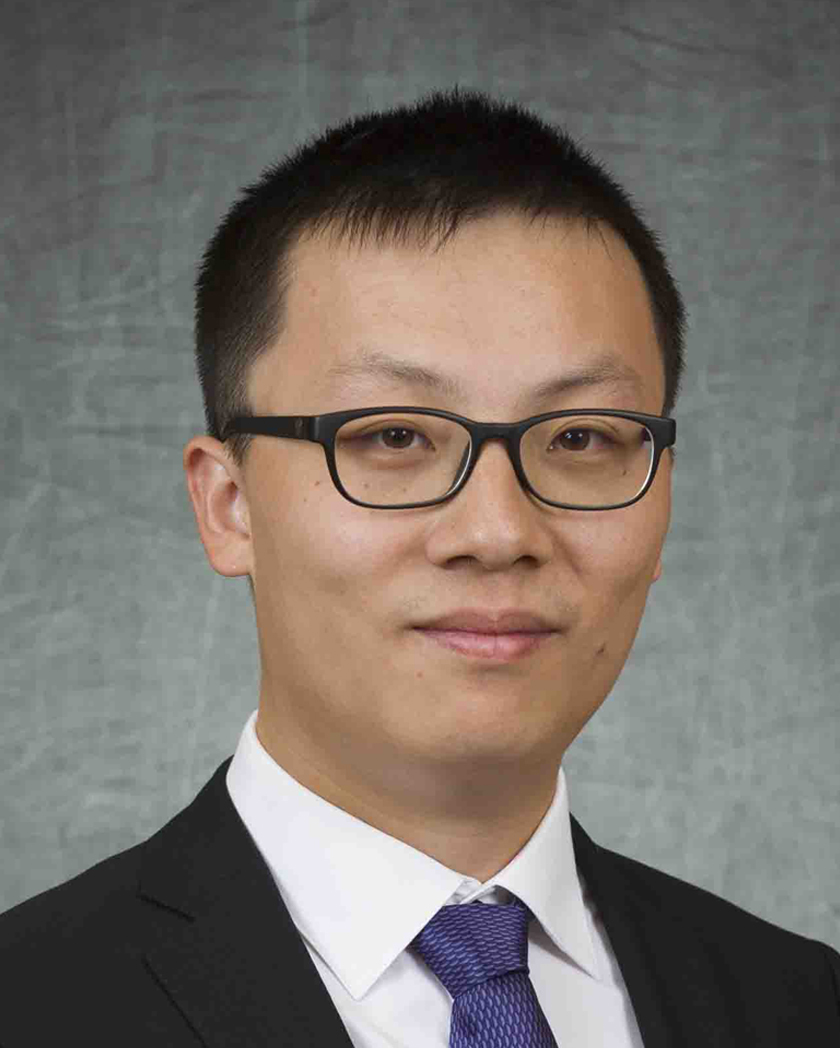}}]
{Ruozhou Yu} (Student Member 2013, Member 2019, Senior Member 2021) is an Assistant Professor of Computer Science at North Carolina State University, USA. He received his PhD degree (2019) in Computer Science from Arizona State University, USA. His research interests include internet-of-things, cloud/edge computing, smart networking, algorithms and optimization, distributed machine learning, security and privacy, blockchain, and quantum networking. He has served or is serving on the organizing committees of IEEE INFOCOM 2022-2023 and IEEE IPCCC 2020-2023, as a TPC Track Chair for IEEE ICCCN 2023, and as members of the technical committee of IEEE INFOCOM 2020-2024 and ACM Mobihoc 2023. He is a recipient of the NSF CAREER Award in 2021.
\end{IEEEbiography}

\vspace{-12mm}
\begin{IEEEbiography}
[{\includegraphics[width=1in,height=1.25in,clip,keepaspectratio]{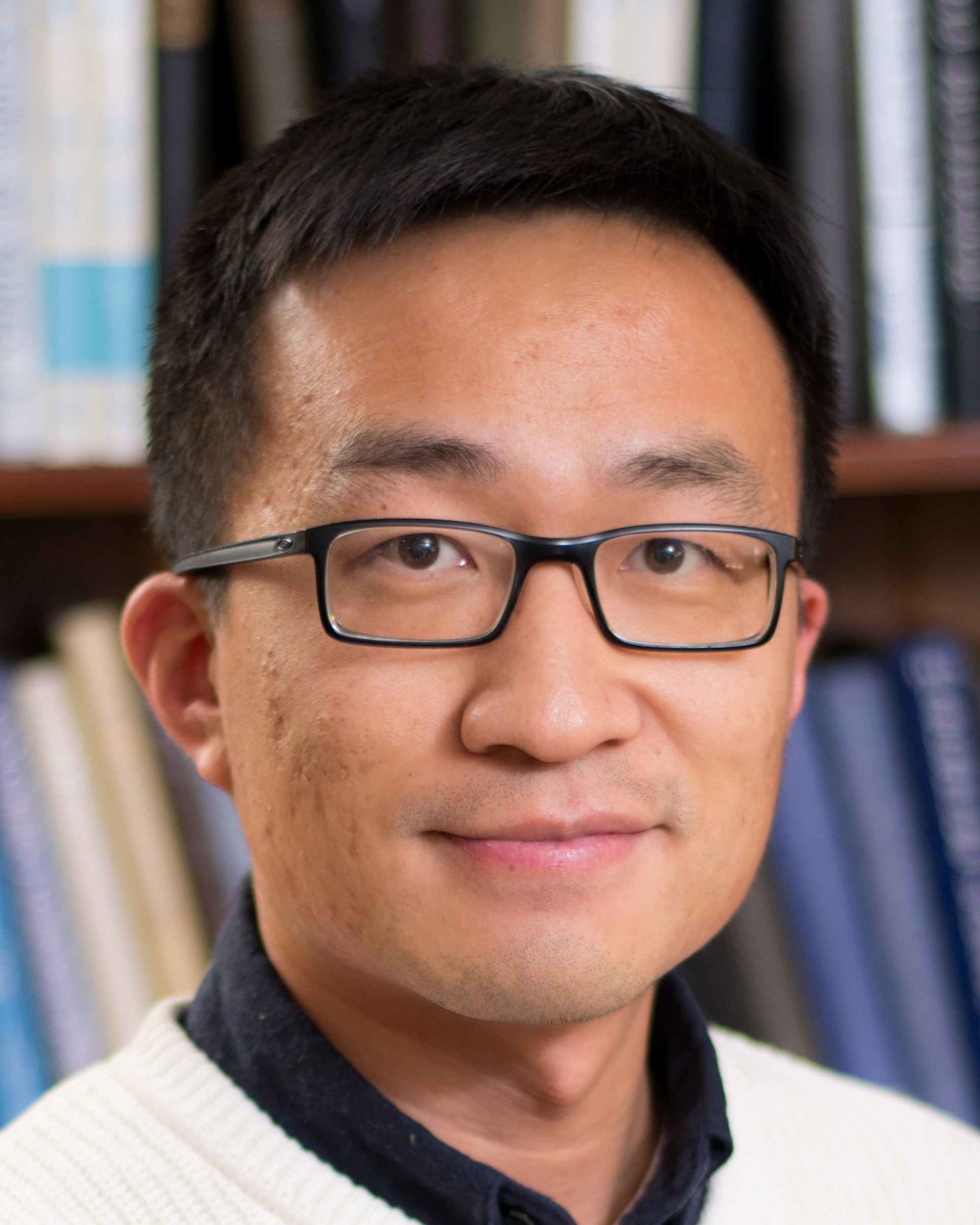}}]
{Dejun Yang} (Senior Member, IEEE) received the B.S. degree in computer science from Peking University, Beijing, China, and the Ph.D. degree in computer science from Arizona State University, Tempe, AZ, USA. \\
He is currently an Associate Professor of Computer Science with the Colorado School of Mines, Golden, CO, USA. His research interests include the Internet of Things, networking, and mobile sensing and computing with a focus on the application of game theory, optimization, algorithm design, and machine learning to resource allocation, security, and privacy problems.\\
Prof. Yang has received the IEEE Communications Society William R. Bennett Prize in 2019. He has served as the TPC Vice-Chair for Information Systems for IEEE International Conference on Computer Communications (INFOCOM). He currently serves an Associate Editor for the IEEE Transactions on Mobile Computing, IEEE Transactions on Network Science and Engineering, and IEEE Internet of Things Journal.
\end{IEEEbiography}
\vspace{-12mm}
\begin{IEEEbiography}[{\includegraphics[width=1in,height=1.25in,clip,keepaspectratio]{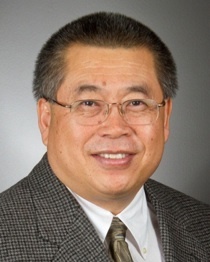}}]
{Guoliang Xue} (Member 1996, Senior Member 1999, Fellow 2011) is
a Professor of Computer Science in the School of Computing and Augmented
Intelligence at Arizona State University.
His research interests span the areas of
Internet-of-things,
cloud/edge/quantum computing and networking,
crowdsourcing and truth discovery,
QoS provisioning and network optimization,
security and privacy,
optimization and machine learning.
He received the IEEE Communications Society William R. Bennett Prize in 2019.
He is an Associate Editor of IEEE Transactions on Mobile Computing,
as well as a member of the Steering Committee of this journal.
He served on the editorial boards of
IEEE/ACM Transactions on Networking
and
IEEE Network Magazine,
as well as the Area Editor of
IEEE Transactions on Wireless Communications, overseeing 13 editors
in the Wireless Networking area.
He has served as VP-Conferences of the IEEE Communications Society.
He is the Steering Committee Chair of IEEE INFOCOM.
\end{IEEEbiography}


\end{document}